\newcolumntype{L}[1]{>{\raggedright\arraybackslash}p{#1}}
\newcolumntype{C}[1]{>{\centering\arraybackslash}m{#1}}
\newcolumntype{R}[1]{>{\raggedleft\arraybackslash}p{#1}}
\newtheorem{theorem}{Theorem}[section]
\newtheorem{corollary}[theorem]{Corollary}
\newtheorem{lemma}[theorem]{Lemma}
\newtheorem{definition}[theorem]{Definition}
\renewcommand{\vec}[1]{\mathbf{#1}}
\renewcommand{\Pr}{P}
\newcommand{\var}{\mathrm{Var}}
\newcommand{\Ext}{\mathrm{Ext}}
\newcommand{\sm}{\eta}
\newcommand{\rbs}{\mathrm{rbs}}
\newcommand{\rmmneq}{\mathrm{neq}}
\newcommand{\rmmeq}{\mathrm{eq}}
\renewcommand{\ggg}{\mathrm{g}}
\renewcommand{\vec}[1]{\boldsymbol{#1}}
\newcommand{\qstate}{\mathcal{E}}
\newcommand{\gqstate}{\mathcal{E}}
\newcommand{\bra}[1]{\mbox{$\left\langle #1 \right|$}}
\newcommand{\ket}[1]{\mbox{$\left| #1 \right\rangle$}}
\begin{document}
\preprint{APS/123-QED}
\title{Improved Real-Time Post-Processing for Quantum Random Number Generators}

\begin{abstract}
Randomness extraction is a key problem in cryptography and theoretical computer science.
With the recent rapid development of quantum cryptography, quantum-proof randomness extraction has also been widely studied, addressing the security issues in the presence of a quantum adversary.
In contrast with conventional quantum-proof randomness extractors characterizing the input raw data as min-entropy sources, we find that the input raw data generated by a large class of trusted-device quantum random number generators can be characterized as the so-called reverse block source. This fact enables us to design improved extractors. Specifically,
we propose two novel quantum-proof randomness extractors for reverse block sources that realize real-time block-wise extraction. In comparison with the general min-entropy randomness extractors, our designs achieve a significantly higher extraction speed and a longer output data length with the same seed length. In addition, they enjoy the property of online algorithms, which process the raw data on the fly without waiting for the entire input raw data to be available. 
These features make our design an adequate choice for the real-time post-processing of practical quantum random number generators. Applying our extractors to the raw data generated by a widely used quantum random number generator, we achieve a simulated extraction speed as high as $300$ Gbps.
\end{abstract}

\author{Qian Li}
\affiliation{Shenzhen lnternational Center For Industrial And Applied  Mathematics, Shenzhen Research Institute of Big Data, Shenzhen, China.}

\author{Xiaoming Sun}
\affiliation{State Key Lab of Processors, Institute of Computing Technology, Chinese Academy of Sciences, 100190, Beijing, China.}

\author{Xingjian Zhang}
\affiliation{Center for Quantum Information, Institute for Interdisciplinary Information Sciences,
Tsinghua University, Beijing 100084, China}

\author{Hongyi Zhou}
\email{zhouhongyi@ict.ac.cn}
\affiliation{State Key Lab of Processors, Institute of Computing Technology, Chinese Academy of Sciences, 100190, Beijing, China.}

\maketitle

\section{introduction}
Randomness extraction aims at distilling uniform randomness from a weak random source \cite{impagliazzo1989recycle}, which is widely applied ranging from cryptography to distributed algorithms.
Recently, quantum cryptography \cite{bennett1984quantum} has been developed rapidly, whose security is guaranteed by the fundamental principle of quantum mechanics \cite{lo1999Unconditional,shor2000Simple}. Compared with its classical counterpart, the unique characteristic of intrinsic randomness enables quantum cryptography with the feasibility of secure communication regardless of the computation power of the eavesdroppers. Randomness extraction also serves as a key step, privacy amplification \cite{bennett1995generalized,bennett1988privacy}, in the post-processing of quantum cryptography, eliminating the side information possessed by a quantum adversary.

The input raw data is conventionally characterized as a min-entropy source. Randomness extraction from a min-entropy source is realized by the so-called min-entropy extractors.
In quantum cryptography, only quantum-proof extractors can provide information-theoretic security in the presence of a quantum adversary \cite{konig2011sampling}. Among all the quantum-proof min-entropy extractors, Trevisan's extractor \cite{trevisan2001extractors,de2012trevisan} and the Toeplitz-hashing extractor \cite{mansour1993computational} are two of the most popular choices. Both of them are strong extractors \cite{nisan1996randomness}, which means that the extracted randomness is independent of the seed. 
Trevisan's extractor requires a seed length of only polylogarithmic scaling of the input length, which is much lower than the linear scaling for the Toeplitz-hashing extractor. On the other hand, the output speed is much lower than that of the Toeplitz-hashing extractor \cite{ma2013postprocessing}. The Toeplitz-hashing extractor is constructed from a cyclic matrix, which is easy to implement and can be accelerated by the Fast Fourier Transformation (FFT) \cite{nussbaumer1981fast,golub1996matrix}.

In order to save the seed and accelerate the extraction speed, a naive post-processing method is to apply the quantum-proof min-entropy extractors (e.g., the Toeplitz-hashing extractor) in a block-wise way: divide the raw data into multiple small blocks, and then repeatedly use a short seed for each block. However, we should point out that this kind of block-wise post-processing on a general min-entropy source could be insecure, i.e., the output could be far from a uniform distribution. For example, consider a $N$-bit input raw data  where the first $N/2$ bits are uniformly random and the last $N/2$ bits are a replication of the first $N/2$ bits (in the same order). Then it is straightforward that the min-entropy of the $N$-bit raw data is $N/2$. However, suppose we divide the raw data into two blocks each of $N/2$ bits, and apply the Toeplitz-hashing extractor to each block with the same seed, then the output of the last block is a replication of that of the first block. This means the whole output is far from a uniform distribution and thus insecure.

For a general min-entropy source, the correct way  of extraction is to apply a quantum-proof min-entropy extractor once on the whole input raw data. However, in the practical implementation, there are three subtle issues. The first two issues are about the computational efficiency. First, both Toeplitz-hashing extractors and Trevisan's extractors are not online: they cannot output even one bit until all bits of the raw data are available. Second, the current extraction speed would be much lower than the raw data generation speed in most quantum random number generator (QRNG) implementations \cite{3GQRNG}. The generation speed of the extracted quantum random numbers is restricted by the extraction speed in the post-processing, which becomes a bottleneck in the applications of quantum random numbers in quantum communication tasks. The third issue is about the seed: the seed used in the extractor is a limited resource, and the output data length is restricted by the seed length, especially for some commercial QRNGs where the seed cannot be updated. We want to extract as many random numbers as possible with a limited seed length.

To deal with the practical issues above, we consider properties of some specific randomness sources beyond the conventional min-entropy source characterization.
In this work, we design two novel quantum-proof randomness extraction algorithms for a large class of QRNGs where the raw data can be described as the reverse block source \cite{Vadhan03}. Our results are inspired by the extractors designed in \cite{CG88,survey} for block sources and Santha-Vazirani sources, respectively. Both of our two extractors are \emph{online} algorithms: as the input raw data arrives in real-time, the two extractors proceed on-the-fly input raw data piece-by-piece, where the processing is independent of the data in the future. In fact, our two extractors are block-wise extractors. That is, they partition the input raw data into blocks serially in the time order the input arrives, and then apply a min-entropy extractor to each block.
Moreover, suppose the input raw data is of length $N$, then our first extractor requires a seed length of $O(\log N)$ and takes equipartition of the block lengths, and our second extractor requires only a seed length of $O(1)$ and incremental block lengths. Compared to the first extractor, the second one scarifies the extraction speed while enjoying a seed length independent of the input length. As a result, this extractor can deal with infinite raw data, without the need of updating the seed and determining the raw data length prior to extraction. For both extractors, the output length is a constant fraction of the min-entropy of the raw data, which indicates that our extractors are quite efficient. To show the performances of the extractors, we make a simulation estimating the output speed. It turns out that the extraction speed is adequate for the post-processing of the common implementations of QRNGs .

\section{Preliminaries}
Throughout the paper, we use capital and lowercase letters to represent random variables and their assignments, respectively. 
We use $U_m$ to represent the perfectly uniform random variable on $m$-bit strings and $\rho_{U_m}$ to represent the $m$-dimensional maximally mixed state.

\begin{definition}[Conditional min-entropy]
Let $Y$ be a classical random variable that takes value $y$ with probability $\Pr_y$ and $\gqstate$ be a quantum system. 
The state of the composite system can be written as $\rho_{Y\gqstate}=\sum_y \Pr_y\ket{y}\bra{y}\otimes \rho_{\gqstate}^y$, where $\{\ket{y}\}_y$ forms an orthonormal basis. 
The conditional min-entropy of $Y$ given $\gqstate$ is
$H_{\min}(Y|\gqstate)_{\rho_{Y\gqstate}}=-\log_2 p_{\mathrm{guess}}(Y|\gqstate)$, where $p_{\mathrm{guess}}(Y|\gqstate)$ is the maximum average probability of guessing $Y$ given the quantum system $\gqstate$. That is,
\begin{equation}
p_{\mathrm{guess}} (Y|\gqstate)=\max_{\{E_{\gqstate}^y\}_y}\left[\sum_y\Pr_y\mathrm{Tr}\left(E_{\gqstate}^y\rho_{\gqstate}^y\right)\right],
\end{equation}
where the maximization is taken over all positive operator-valued measures (POVMs) $\{E_{\gqstate}^y\}_y$ on $\gqstate$.
\end{definition}

When system $Y$ is decoupled from $\gqstate$, where $\rho_{Y\gqstate}=\sum_y \Pr_y\ket{y}\bra{y}\otimes \rho_{\gqstate}$, the conditional min-entropy of $Y$ given $\gqstate$ reduces to the classical min-entropy, $H_{\min}(Y)=-\log_2 \max_y \Pr_y$. When $\rho_{Y\gqstate}$ is clear from the context, we will denote the conditional min-entropy as $H_{\min}(Y|\gqstate)$ for brevity.




In this paper, we call the raw data generated by a QRNG as a random source. A general random source is the min-entropy source, where the conditional min-entropy is lower bounded.

\begin{definition}[Min-entropy quantum-proof extractor]
A function $\Ext:\{0,1\}^{n}\times \{0,1\}^d\rightarrow \{0,1\}^m$ is a $(\delta n,\epsilon)$ min-entropy quantum-proof extractor, if for every random source $Y$ and quantum system $\gqstate$ satisfying $H_{\min}(Y|\gqstate)\geq \delta n$, we have
\begin{equation}\label{eq:criteria_ext}
\frac{1}{2}\|\rho_{\Ext(Y,S)\gqstate} - \rho_{U_m}\otimes \rho_{\gqstate} \| \leq \epsilon,
\end{equation}
where $S$ is called the seed, which is a perfectly uniform random variable on $d$-bit strings independent of the system $Y\gqstate$ and $\|\cdot\|$ denotes the trace norm defined by $\|A\| = \mathrm{Tr}\sqrt{A^\dag A}$. An extractor $\Ext$ is said to be strong if
\begin{equation}\label{eq:criteria_strongext}
\frac{1}{2}\|\rho_{\Ext(Y,S)S\gqstate} - \rho_{U_m}\otimes \rho_{U_d}\otimes \rho_{\gqstate} \| \leq \epsilon.
\end{equation}
We call the concatenation of the output string of a strong extractor with the seed as an expansion, denoted as the tuple
$\left(\Ext(y,s), s\right)$.
\end{definition}

It is straightforward to see that an expansion is a standard $(\delta n,\epsilon)$ min-entropy quantum-proof extractor.
If the output of an extractor satisfies Eq.~\eqref{eq:criteria_ext} or \eqref{eq:criteria_strongext}, we say that the output is $\epsilon$-close to a uniform distribution.

A widely used randomness extractor is the Toeplitz-hashing extractor.

\begin{definition}[Toeplitz-hashing extractor]
A $u\times n$ matrix $T$ is a Toeplitz matrix if $T^{ij}=T^{i+1,j+1}=s_{j-i}$ for all
$i=1,\cdots,u-1$ and $j=1,\cdots,n-1$. A Toeplitz matrix over the finite field $GF(2)$, $T_{s}$, can be specified by a bit string $s=(s_{1-u},s_{2-u},\cdots,s_{n-1})$ of length $u+n-1$.
Given any $n,d\in \mathbb{N}^+$ where $d\geq n$, define the Toeplitz-hashing extractor $\Ext_{T}^{n,d}:\{0,1\}^n\times \{0,1\}^d\rightarrow \{0,1\}^{d-n+1}$ as $\Ext_{T}^{n,d}(y,s)=T_s\cdot y$, and define the expanded Toeplitz-hashing extractor $\Ext_{T'}^{n,d}:\{0,1\}^n\times \{0,1\}^{d}\rightarrow \{0,1\}^{2d-n+1}$ as $\Ext_{T'}^{n,d}(y,s)=(T_{s}\cdot y,s)$, where the matrix product operation $\cdot$ is calculated over the field $GF(2)$.
\end{definition}

Since $\{T_{s}\cdot y|s\in\{0,1\}^{u+n-1}\}$ is a family of pairwise independent hashing functions \cite{MansourNT90,Krawczyk94}, according to the quantum Leftover Hash Lemma \cite{TomamichelRSS10}, we can prove that the Toeplitz-hashing extractor is a min-entropy quantum-proof strong extractor.


\begin{lemma}[\cite{TomamichelRSS10}]\label{coro:gadget}
For every $n\in\mathbb{N}^+$ and $\delta>0$, $\Ext_{T}^{n,d}$ is a $\left(\delta n,\epsilon\right)$ min-entropy quantum-proof strong extractor, where $\epsilon=2^{-(\delta n+n-d-1)/2}$. Equivalently, $\Ext_{T'}^{n,d}$ is a $\left(\delta n,\epsilon\right)$ min-entropy quantum-proof extractor.
\end{lemma}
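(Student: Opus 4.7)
The plan is to invoke the quantum Leftover Hash Lemma (QLHL) of Tomamichel et al.~\cite{TomamichelRSS10}: for any two-universal hash family $\{h_s\}_{s\in\{0,1\}^d}$ from $\{0,1\}^n$ to $\{0,1\}^m$ and any classical--quantum state $\rho_{Y\gqstate}$ with $H_{\min}(Y|\gqstate)\geq k$, one has $\tfrac12\|\rho_{h_S(Y)S\gqstate}-\rho_{U_m}\otimes\rho_{U_d}\otimes\rho_{\gqstate}\|\leq 2^{-(k-m)/2}$ in trace norm (up to a standard constant absorbed into the bookkeeping below). So the two tasks are (i) to check that the Toeplitz family is two-universal, and (ii) to substitute parameters and recover the stated $\epsilon$.

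For (i), I would fix distinct $y,y'\in\{0,1\}^n$ and show $\Pr_s[T_s y=T_s y']=2^{-u}$ with $u=d-n+1$. Since a Toeplitz matrix depends $\mathbb{F}_2$-linearly on its defining string, the map $s\mapsto T_s y-T_s y'=T_s(y-y')$ is an $\mathbb{F}_2$-linear map from $\{0,1\}^{u+n-1}$ to $\{0,1\}^u$. Picking any nonzero coordinate of $y-y'$ lets one exhibit $u$ coordinates of $s$ that freely parameterize the image, so the map is surjective and every image has preimage size $2^{n-1}$, giving exactly $2^{-u}$. This is the two-universality condition required by QLHL, essentially as recorded in~\cite{MansourNT90,Krawczyk94}.

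For (ii), apply QLHL with $m=u=d-n+1$ and $k=\delta n$. The exponent becomes $-(k-m)/2=-(\delta n-(d-n+1))/2=-(\delta n + n - d - 1)/2$, matching the stated $\epsilon$ and establishing that $\Ext_{T}^{n,d}$ is a strong $(\delta n,\epsilon)$ extractor. The ``equivalently'' clause is then immediate: the strong inequality~\eqref{eq:criteria_strongext} for $\Ext_{T}^{n,d}$ is, definitionally, the non-strong inequality~\eqref{eq:criteria_ext} applied to the tuple $(T_S Y,S)=\Ext_{T'}^{n,d}(Y,S)$, so the expanded construction is automatically a $(\delta n,\epsilon)$ min-entropy quantum-proof extractor.

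The main obstacle is purely bookkeeping: pinning down exactly which normalization of QLHL is used in~\cite{TomamichelRSS10} so that the bound yields precisely the exponent $-(\delta n+n-d-1)/2$ rather than one differing by a small additive constant, and confirming that the reference states the strong-seed version so that $S$ appears in the output joint state. No new conceptual ingredient is needed beyond two-universality of Toeplitz matrices and the cited QLHL.
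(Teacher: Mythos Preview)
Your proposal is correct and follows exactly the approach the paper indicates: the paper does not give a self-contained proof of this lemma but simply remarks that the Toeplitz family is pairwise independent \cite{MansourNT90,Krawczyk94} and then cites the quantum Leftover Hash Lemma \cite{TomamichelRSS10}, which is precisely your plan (i)+(ii). Your surjectivity argument for two-universality and your parameter substitution $m=d-n+1$, $k=\delta n$ are both fine, and your observation that the ``equivalently'' clause is just the definitional unpacking of a strong extractor is exactly right.
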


Note that the output of $\Ext_{T'}^{n,d}$ has $2d-n+1$ bits, where the last $d$ bits form the seed. We remark that though the Toeplitz-hashing extractor $\Ext_T^n$ is strong, the expanded Toeplitz-hashing extractor $\Ext_{T'}^n$ is not.

\section{Main result}

We use $X=X_1X_2\cdots X_N\in (\{0,1\}^b)^N$ to denote the raw data generated by a QRNG, which contains $N$ samples each of $b$ bits. For a set $I\subset \mathbb{N}^+$, we write $X_I$ for the restriction
of $X$ to the samples determined by $I$. For example, if $I=\{2,3,5\}$, then $X_I=X_2X_3X_5$.
We use $\qstate$ to denote the quantum system possessed by the quantum adversary.

\subsection{Reverse block source}
For a given randomness extractor, there is a trade-off between its performance and the generality of the random sources it applies to. The more special random sources the extractor works for, the better performance the extractor may achieve. In this paper, the notion of \emph{reverse block sources}, which are more special than the min-entropy sources, plays a critical role in the sense that (i) raw data of a large class of QRNGs can be described as a reverse block source and (ii) pretty good quantum-proof extractors for reverse block sources exist. A quantum version of the reverse block source is defined below.
\begin{definition}[Reverse block source, adapted from Definition 1 in \cite{Vadhan03}]\label{def:rbs}
	A string of random variables $X=X_1\cdots X_N\in(\{0,1\}^b)^N$ is a $(b,N,\delta)$-reverse block source given a quantum system $\qstate$ if 
for every $1\leq k\leq i\leq N$ and every $x_{i+1},x_{i+2},\cdots, x_N$,
\begin{equation}\label{eq:rrrbs_def}
\begin{aligned}
&H_{\min}(X_k,X_{k+1},\cdots,X_{i}| X_{i+1}=x_{i+1},\cdots,X_N=x_N,\qstate) \\
&\geq (i-k+1)\delta\cdot b.
\end{aligned}
\end{equation}
\end{definition}
Intuitively, Eq.~\eqref{eq:rrrbs_def} implies 
there are at least $\delta b$ bits of information in $X_i$ that will be “forgotten” at the next time step. Consider a (forward) block source satisfying 
$H_{\min}(X_{k},\cdots,X_i|X_{k-1}=x_{k-1},\cdots,X_1=x_1,\qstate)\geq (k-i+1)\delta b.
$
A reverse block source can be understood as a time-reversed (forward) block source. A reverse block source is somewhat less natural than the standard notion of the (forward) block sources, but as we will see later, it may still be a reasonable model for some physical sources. In fact, the consideration of the reverse-block condition raises interesting philosophical questions: does the universe keep a perfect record of the past? If not, then the reverse-block condition seems plausible \cite{Vadhan03}. 

For QRNGs where the raw data are mutually independent, such as the ones based on single photon detection \cite{Rarity94, Stefanov00, Jennewein00}, vacuum fluctuations \cite{Gabriel10,Symul11,Jofre11}, and photon arrival time \cite{Wahl11,Li13,Nie14}, the min-entropy source automatically satisfies Eq.~\eqref{eq:rrrbs_def}, where $\forall x_{i+1},\cdots,x_N$, 
$H_{\min}(X_{k},\cdots,X_i| X_{i+1}=x_{i+1},\cdots,X_N=x_N,\qstate)=\sum_{j=k}^i H_{\min}(X_j| \qstate)$, 
hence is also a reverse block source. For QRNGs with correlated raw data, one can construct appropriate physical models to check whether Eq.~\eqref{eq:rrrbs_def} is satisfied. In Appendix~\ref{app:rbs}, we take a common QRNG scheme employing homodyne detectors with finite bandwidth~\cite{gehring2021homodyne}, as an example to prove Eq.~\eqref{eq:rrrbs_def}. More generally, for physical systems with time-reversal symmetry, the reverse-block condition holds if and only if the forward-block condition holds. In Sec.~\ref{sec:ext_rbs}, we will see why the reverse-block condition (high entropy conditioned on the future) is necessary for designing online algorithms.


We also consider a smoothed version of the reverse block source.
For $X=X_1\cdots X_N\in(\{0,1\}^b)^N$, Denote the underlying joint quantum state of the random source over the systems $X$ and $\qstate$ as $\rho_{X\qstate}$. We call $X$ a $(b,N,\delta,\epsilon_\mathrm{s})$-smoothed reverse block source, 
if there exists a state $\rho_{X\qstate}^*$ that is $\varepsilon_s$-close to $\rho_{X\qstate}$, 
\begin{equation}
    \frac{1}{2}\|\rho_{X\qstate}^*-\rho_{X\qstate}\|\leq\varepsilon_s,
\end{equation}
such that $\rho_{X\qstate}^*$ is a $(b,N,\delta)$-reverse block source.
The smoothed reverse block source is in the same spirit of the smooth conditional entropy~\cite{renner2008security}. The motivation of introducing a smoothed version is to exclude singular points or regions in a probability distribution, which will help extract more randomness.

Here we remark that we mainly consider the trusted-device QRNGs where the extraction speed is a bottleneck. For QRNGs with a higher security level, such as the semi-device-independent QRNGs and device-independent QRNGs, the reverse block source property is not satisfied in general. These types of QRNGs require fewer assumptions and characterizations on the devices at the expense of relatively low randomness generation rates of raw data. Then their real-time randomness generation rates are not limited by the extraction speed.
\subsection{Extractors for reverse block sources}\label{sec:ext_rbs}

In this section, we design two online quantum-proof extractors that can both extract a constant fraction of the min-entropy from reverse block sources. The two extractors both proceed in the following fashion: partition the input raw data into blocks, and apply a min-entropy quantum-proof extractor to each block using part of the output of the previous block as the seed. The basic building block of our designs is family of $(\delta n,\epsilon_n)$ min-entropy quantum-proof extractors, denoted by $\Ext_\ggg=\{\Ext_\ggg^n:n\in\mathbb{N}^+\}$. Here, the subscript of $\Ext_{\ggg}$ means ``gadget". Let $d_n$ and $m_n$ denote the seed length and output length of $\Ext_\ggg^n$, respectively, i.e., $\Ext_\ggg^n:\{0,1\}^n\times\{0,1\}^{d_n}\rightarrow \{0,1\}^{m_n}$. We require the gadget $\Ext_\ggg^n$ to satisfy: (i) $\epsilon_n$ is exponentially small, for instance, $\epsilon_n=2^{-\delta n/4}$, which aims at that the summation $\sum_{n=1}^{\infty} \epsilon_n$ converges; and (ii) $m_n-d_n\geq c\delta n$ for some constant $c>0$, which means that $\Ext_\ggg^n$ extracts a constant fraction of min-entropy from the raw data. As an explicit construction, $\Ext_\ggg^n$ can be specified as the expanded Toeplitz-hashing extractor $\Ext_{T'}^{n,d_n}$ where $d_{n}=(1+\delta/2)n-1$, then $m_n-d_n=\delta n/2$, $m_n=(1+\delta)n-1$, and $\epsilon_n=2^{-\delta n/4}$. In the rest of this paper, we abbreviate $\Ext_{T}^{n,d_n}$ and $\Ext_{T'}^{n,d_n}$ where $d_{n}=(1+\delta/2)n-1$ to $\Ext_T^n$ and $\Ext_{T'}^n$, respectively. Besides, to simplify our discussion, we assume $\epsilon_n=2^{-\delta n/4}$, and the analysis for other exponentially small values of $\epsilon_n$ is similar.

The two extractors are described in Algorithms \ref{alg:rbseq} and \ref{alg:rbsneq}, respectively. Given an input raw data of length $N$, the first extractor, named $\Ext_{\rbs}^{\rmmeq}$, evenly partitions the input raw data into blocks each of size $O(\log N)$ and requires $O(\log N)$ random bits as the initial seed. In particular, if the min-entropy quantum-proof extractor $\Ext_\ggg$ in use is the expansion of a strong extractor such as the expanded Toeplitz-hashing extractor, then $\Ext_{\rbs}^{\rmmeq}$ degenerates exactly to the following naive extractor: partition the input raw data into equal-sized blocks and apply the corresponding strong extractor to each block separately with the same seed. The second extractor, named $\Ext_{\rbs}^{\rmmneq}$, is inspired by the extractor that can extract randomness from Santha-Vazirani sources using a seed of constant length \cite{survey}. It uses only $O(1)$ random bits as the initial seed and requires incremental block lengths. Compared to the first extractor, the second one is less hardware-friendly and sacrifices the extraction speed in general. On the other hand, it enjoys the property that the seed length is independent of the input length. As a result, this extractor can deal with infinite raw data, without the need of determining the raw data length prior to extraction. In other words, one does not need to update the seed in practical implementations. We remark that the initial seed is indispensable because there does not exist any nontrivial deterministic extractor for reverse block sources. The proof is presented in Appendix \ref{app:seed_is_necessary}. 

\begin{algorithm}[htb]
	\caption{$\Ext_{\rbs}^{\rmmeq}$}
	\label{alg:rbseq}
	\textbf{Input}: $b\in\mathbb{N}^+$ and $0<\epsilon,\delta<1$. A string $x=x_1,x_2,\cdots,x_N$ sampled from a $(b,N,\delta)$-reverse block source;\\
	Let $i:=1$ and $n:=\left\lceil\frac{4}{\delta b}\cdot \log\left(\frac{N}{\epsilon}\right)\right\rceil$;\\
	Sample a uniform random bit string $s^{(1)}$ of length $d_{bn}$;\\
	\For{$\ell=1$ to $N/n$}{
		Let $I_{\ell}:=[i,i+n-1]$;\\
		Compute $z^{(\ell)}:=\Ext_\ggg^{bn}(x_{I_{\ell}},s^{(\ell)})$;\\
		Let $i:=i+n$;\\
		Cut $z^{(\ell)}$ into two substrings, denoted by $r^{(\ell)}$ and $s^{(\ell+1)}$, of size $m_{bn}-d_{bn}$ and $d_{bn}$ respectively;\\
		Output $r^{(\ell)}$.
	}
\end{algorithm}

\begin{algorithm}[htb]
	\caption{$\Ext_{\rbs}^{\rmmneq}$}
	\label{alg:rbsneq}
	\textbf{Input}: $b\in\mathbb{N}^+$ and $0<\delta<1$. A string $x=x_1,x_2,\cdots$ sampled from a $(b,\infty,\delta)$-reverse block source;\\
	\textbf{Parameter:} $n_1,\Delta\in\mathbb{N}^+$; \\
	Let $i:=1$;\\
	Sample a uniform random bit string $s^{(1)}$ of length $d_{bn_1}$;\\
	\For{$\ell=1$ to $\infty$}{
		Let $I_{\ell}:=[i,i+n_{\ell}-1]$;\\
		Compute $z^{(\ell)}:=\Ext_\ggg^{bn_{\ell}}(x_{I_{\ell}},s^{(\ell)})$;\\
		Let $i:=i+n_{\ell}$ and $n_{\ell+1}:=n_{\ell}+\Delta$;\\
		Cut $z^{(\ell)}$ into two substrings, denoted by $r^{(\ell)}$ and $s^{(\ell+1)}$, of size $m_{bn_\ell}-d_{bn_{\ell+1}}$ and $d_{bn_{\ell+1}}$ respectively\footnote{We should choose the parameters $n_1$ and $\Delta$ properly to have $m_{bn_\ell}-d_{bn_{\ell+1}}\geq1$.};\\
		Output $r^{(\ell)}$.
	}
\end{algorithm}

\begin{figure*}[htbp]
	\centering
	\includegraphics[scale=0.65]{./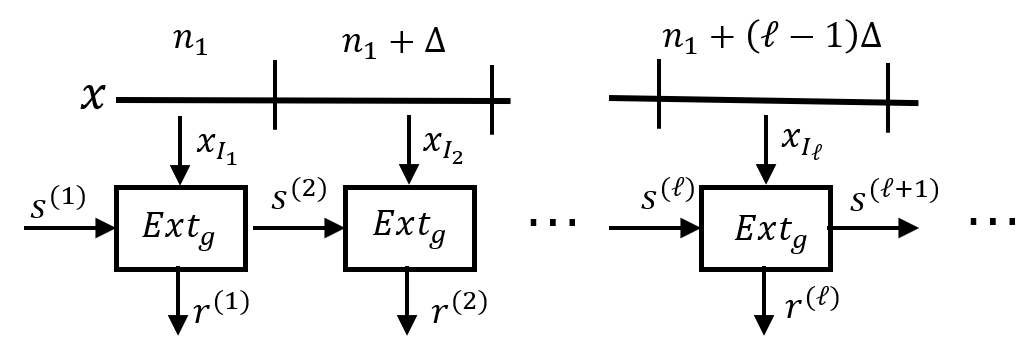}
	\caption{Illustration of $\Ext_{\rbs}^{\rmmneq}$.}
	\label{fig3}
\end{figure*}

Note that if we impose $\Delta=0$ and let $n_1=\left\lceil\frac{4}{\delta b}\cdot \log\left(\frac{N}{\epsilon}\right)\right\rceil$, then $\Ext_{\rbs}^{\rmmneq}$ becomes $\Ext_{\rbs}^{\rmmeq}$.
We first analyze the second extractor $\Ext_{\rbs}^{\rmmneq}$.

\begin{theorem}\label{thm:ext_rbs}
The extractor $\Ext_{\rbs}^{\rmmneq}$ satisfies the following properties:
\begin{itemize}
\item[(a)] It uses a seed of $d_{bn_1}$ length.
\item[(b)] For any $k\in\mathbb{N}$, we have
\begin{equation}		\frac{1}{2}\|\rho_{r^{(1)}\circ r^{(2)}\circ\cdots\circ r^{(k)}\qstate}-\rho_{U_{\sm_k}}\otimes \rho_{\qstate}\|\leq \sum_{\ell=1}^k2^{-\delta bn_\ell/4}<\frac{2^{-\delta b n_1/4}}{1-2^{-\delta b\Delta/4}},
\end{equation}
where $\sm_k=\sum_{\ell=1}^{k}(m_{bn_{\ell}}-d_{bn_{\ell+1}})$.
\end{itemize}
\end{theorem}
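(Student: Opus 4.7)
\textbf{Proof plan for Theorem~\ref{thm:ext_rbs}.}

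Part (a) is immediate from reading off Line 4 of Algorithm~\ref{alg:rbsneq}, so the real content is part (b). I would prove it by induction on the block index $\ell$, maintaining the invariant that the joint state of \emph{all} randomness produced so far together with the yet-unconsumed seed is close to uniform even in the presence of the remaining blocks of raw data and the quantum side information. Concretely, for $\ell \ge 0$ let $M_\ell := \sum_{j=1}^{\ell}(m_{bn_j}-d_{bn_{j+1}}) + d_{bn_{\ell+1}}$ be the total length of the tuple $W_\ell := (r^{(1)},\ldots,r^{(\ell)},s^{(\ell+1)})$, and let $X_{>\ell} := X_{I_{\ell+1}} X_{I_{\ell+2}}\cdots$. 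I would prove
\begin{equation*}
\tfrac{1}{2}\bigl\|\rho_{W_\ell\, X_{>\ell}\,\qstate}-\rho_{U_{M_\ell}}\otimes \rho_{X_{>\ell}\,\qstate}\bigr\|\ \le\ \sum_{j=1}^{\ell} 2^{-\delta b n_j/4}.
\end{equation*}
The base case $\ell=0$ holds with equality since $s^{(1)}$ is chosen uniformly and independently of the source and the adversary.

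For the inductive step, suppose the invariant holds at step $\ell-1$. By the triangle inequality and monotonicity of trace distance under quantum operations, I may replace the real joint state of $(r^{(1)},\ldots,r^{(\ell-1)},s^{(\ell)},X_{I_\ell},X_{>\ell},\qstate)$ by the ideal state in which $(r^{(1)},\ldots,r^{(\ell-1)},s^{(\ell)})$ is exactly uniform and independent of $(X_{I_\ell},X_{>\ell},\qstate)$, paying $\sum_{j=1}^{\ell-1}2^{-\delta b n_j/4}$ in trace distance. In the ideal state, Definition~\ref{def:rbs} gives $H_{\min}(X_{I_\ell}\mid X_{>\ell},\qstate)\ge \delta b n_\ell$, and adjoining $(r^{(1)},\ldots,r^{(\ell-1)})$ to the conditioning is free because those bits are independent of $(X_{I_\ell},X_{>\ell},\qstate)$. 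I can now invoke the gadget guarantee: since $\Ext_\ggg^{bn_\ell}$ is a $(\delta b n_\ell,\,2^{-\delta b n_\ell/4})$ min-entropy quantum-proof extractor (Lemma~\ref{coro:gadget}), feeding it $X_{I_\ell}$ with the ideal independent uniform seed $s^{(\ell)}$ produces $z^{(\ell)}=(r^{(\ell)},s^{(\ell+1)})$ that is $2^{-\delta b n_\ell/4}$-close to uniform conditioned on $(X_{>\ell},\qstate,r^{(1)},\ldots,r^{(\ell-1)})$. Concatenating $(r^{(1)},\ldots,r^{(\ell-1)})$ with $(r^{(\ell)},s^{(\ell+1)})$ yields $W_\ell$, and a second triangle inequality closes the induction with cumulative error $\sum_{j=1}^{\ell} 2^{-\delta b n_j/4}$.

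Part (b) then follows by tracing out $s^{(k+1)}$ and $X_{>k}$ (a quantum operation, hence nonincreasing in trace distance), which leaves $r^{(1)}\!\circ\!\cdots\!\circ\! r^{(k)}$ close to uniform conditioned on $\qstate$ with the claimed error. The stated strict inequality is a geometric-series bound: with $n_\ell=n_1+(\ell-1)\Delta$ one has $\sum_{\ell=1}^{k} 2^{-\delta b n_\ell/4} < 2^{-\delta b n_1/4}\sum_{\ell=0}^{\infty} 2^{-\ell\delta b\Delta/4} = \dfrac{2^{-\delta b n_1/4}}{1-2^{-\delta b\Delta/4}}$.

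The main obstacle I anticipate is bookkeeping around the fact that the seed $s^{(\ell)}$ handed to the $\ell$-th gadget is only approximately uniform and is correlated with the past outputs, so one cannot apply the extractor guarantee to the true state directly. The trick is to pay the inductive error up front via trace-distance triangle inequality to move to an ideal state where $s^{(\ell)}$ is exactly uniform and independent, and only then invoke Lemma~\ref{coro:gadget}; the reverse-block condition (min-entropy conditioned on the \emph{future}) is exactly what makes this manipulation compatible with online, past-to-future processing.
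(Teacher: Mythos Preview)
Your proposal is correct and follows essentially the same approach as the paper: both prove the strengthened inductive statement that $(r^{(1)},\ldots,r^{(\ell)},s^{(\ell+1)})$ is close to uniform and independent of $(X_{I_{\ell+1:\infty}},\qstate)$, with the inductive step carried out via contractivity of trace-preserving maps, the reverse-block min-entropy bound, the gadget extractor guarantee, and the triangle inequality, followed by the geometric-series estimate. The only cosmetic difference is ordering---you first pass to the ideal state and then apply the extractor, whereas the paper first applies the map $\Ext_\ggg$ under contractivity and then compares to uniform---but the two arrangements are equivalent.
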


\begin{proof} The nontrivial part is Part (b). For convenience of presentation, we use $I_{k:\infty}$ to represent $I_k\cup I_{k+1}\cup\cdots\cup I_{\infty}$. In fact, we will prove that for any $k\in\mathbb{N}$ it has
\begin{equation}\label{eq:induction} 
\begin{aligned}
&\frac{1}{2}\left\|\rho_{r^{(1)}\circ r^{(2)}\circ\cdots\circ r^{(k)}\circ s^{(k+1)}\circ X_{I_{k+1:\infty}}\qstate}-\rho_{U_{\sm_k+d_{bn_{k+1}}}}\otimes \rho_{X_{I_{k+1:\infty}}\qstate}\right\| \\
&\leq \sum_{\ell=1}^k 2^{-\delta bn_\ell/4},
\end{aligned}
\end{equation}
which implies Part (b) immediately.
	
	The proof is by an induction on $k$. The base case when $k=0$ is trivial. The induction proceeds as follows. Suppose Eq.~\eqref{eq:induction} is true.
	Then due to the contractivity of trace-preserving quantum operations, we have
	\begin{equation}\label{eq:proof1}
    \begin{aligned}
	&\frac{1}{2}\left\|\rho_{r^{(1)}\circ r^{(2)}\circ\cdots\circ r^{(k)}\circ \Ext_\ggg\left(X_{I_{k+1}},s^{(k+1)}\right)\circ X_{I_{k+2:\infty}}\qstate} \right.\\
      &\quad \left.-\rho_{U_{\sm_k}}\otimes \rho_{\Ext_\ggg\left(X_{I_{k+1}},U_{d_{bn_{k+1}}}\right)\circ X_{I_{k+2:\infty}}\qstate}\right\| \\
      &\leq \sum_{\ell=1}^k 2^{-\delta bn_\ell/4}.
 \end{aligned}
	\end{equation}
	On the other hand, by Definition \ref{def:rbs}, for any assignment $x_{I_{k+2:\infty}}$ of $X_{I_{k+2:\infty}}$, we have
	\begin{equation}
	H_{\min}(X_{I_{k+1}}| X_{I_{k+2:\infty}}=x_{I_{k+2:\infty}},\qstate)\geq \delta bn_{k+1}.
	\end{equation}
	Then, recalling that $\Ext_\ggg$ is a min-entropy quantum-proof extractor, it follows that
\begin{equation}
\begin{aligned}	
&\frac{1}{2}\left\|\rho_{\Ext_\ggg\left(X_{I_{k+1}},U_{d_{bn_{k+1}}}\right)\circ X_{I_{k+2:\infty}}\qstate}-\rho_{U_{m_{bn_{k+1}}}}\otimes \rho_{X_{I_{k+2:\infty}}\qstate}\right\| \\
&\leq 2^{-\delta bn_{k+1}/4}.
\end{aligned}
\end{equation}
Thus,
\begin{equation}
	\begin{aligned}\label{eq:proof2} 
 &\frac{1}{2}\left\|\rho_{U_{\sm_k}}\otimes\rho_{\Ext_\ggg\left(X_{I_{k+1}},U_{d_{bn_{k+1}}}\right)\circ X_{I_{k+2:\infty}}\qstate} \right. \\
 &\quad\left.-\rho_{U_{\sm_k}}\otimes\rho_{U_{m_{bn_{k+1}}}}\otimes \rho_{X_{I_{k+2:\infty}}\qstate}\right\| \\
 & \leq 2^{-\delta bn_{k+1}/4}.
\end{aligned}
\end{equation}
Finally, combining inequalities \eqref{eq:proof1} and \eqref{eq:proof2} and applying the triangle inequality, we conclude that
\begin{equation} 
\begin{aligned}
&\frac{1}{2}\left\|\rho_{r^{(1)}\circ r^{(2)}\circ\cdots\circ r^{(k+1)}\circ s^{(k+2)}\circ X_{I_{k+2:\infty}}\qstate} \right. \\
& \left. \quad -\rho_{U_{\sm_{k+1}+d_{bn_{k+2}}}}\otimes \rho_{X_{I_{k+2:\infty}}\qstate}\right\| \\
&\leq \sum_{\ell=1}^{k+1} 2^{-\delta bn_\ell/4}.
\end{aligned}
\end{equation}
By using the summation formula for the geometric progression, the above inequality is further upper bounded by $\frac{2^{-\delta b n_1/4}}{1-2^{-\delta b\Delta/4}}$.
\end{proof}

As can be seen from the proof, the parameter $\Delta$ of $\Ext_{\rbs}^{\rmmneq}$ must be strictly positive, since the upper bound $\sum_{\ell=1}^{k} 2^{-\delta bn_\ell/4}$ on the error converges if and only if $\Delta>0$. Theorem \ref{thm:ext_rbs} implies that the (infinitely long) output string $r^{(1)}\circ r^{(2)}\circ\cdots$ can be arbitrarily close to the uniform distribution by choosing a sufficiently large constant $n_1$. As an explicit construction, suppose $\Ext_{\rbs}^{\rmmneq}$ adopts the expanded Toeplitz-hashing extractor $\Ext^n_{T'}$ as the gadget $\Ext^n_\ggg$, where $d_{n}=(1+\delta/2)n-1$ and $m_n=(1+\delta)n-1$. We further set $\Delta=1$. We require that $n_1\geq 4/\delta+1$ such that $m_{bn_\ell}-d_{bn_{\ell+1}}=b(\delta n_{\ell}/2-1-\delta/2)\geq 1$ for any $\ell$. Then $\Ext_{\rbs}^{\rmmneq}$ extracts $(1+\delta)b(n_1+\ell-1)-1$ random bits from the $\ell$-th block $x_{I_{\ell}}$, outputs the first $m_{bn_\ell}-d_{bn_{\ell+1}}\approx \delta b n_{\ell}/2$ bits, and then uses the last $(1+\delta/2)b(n_1+\ell)-1$ bits as the seed of the next block.

We intuitively explain Theorem~\ref{thm:ext_rbs}, i.e., how our algorithms work for reverse block sources. When the input raw data is a reverse block source, the block $X_{I_{1}}$ has non-zero min-entropy conditioned on any assignment of future blocks, which implies that the output $z^{(1)}$ of this block is (approximately) uniform conditioned on any assignment of future blocks and further implies that $z^{(1)}$ is independent of future blocks. Then $r^{(1)}$, $s^{(2)}$, and the further blocks are mutually independent, which guarantees the following two properties:
 (i) the seed $s^{(2)}$ and the raw data $X_{I_{2}}$ are independent when applying $\Ext_\ggg$ to $X_{I_{2}}$; 
(ii) the output $r^{(1)}$ and $r^{(2)}$ are independent, as $r^{(2)}$ is fully determined by $s^{(2)}$ and $X_{I_{2}}$.
 By repeating the above argument to $X_{I_2},X_{I_3},\cdots$ and so on, we can see that each $r^{(k)}$ is uniform and $r^{(1)},\cdots,r^{(k)},\cdots$ are mutually independent. So we can conclude that the whole output $r^{(1)}\circ\cdots\circ r^{(k)}\circ\cdots$ is a uniform distribution. We remark that the reverse-block condition is necessary, as it guarantees that the seed and the raw data are independent when applying $\Ext_\ggg$ while the (forward) block condition does not.


Via a similar argument as in Theorem \ref{thm:ext_rbs}, we have the following result for $\Ext_{\rbs}^{\rmmeq}$.
\begin{theorem}\label{thm:ext_rbseq}
The extractor $\Ext_{\rbs}^{\rmmeq}$ uses $d_{bn}$ random bits as a seed and outputs a string $r^{(1)}\circ r^{(2)}\circ\cdots\circ r^{(N/n)}$ satisfying that
\begin{equation}
  \frac{1}{2}\|\rho_{r^{(1)}\circ r^{(2)}\circ\cdots\circ r^{(N/n)}\circ\qstate}-\rho_{U_{\eta}}\otimes \rho_{\qstate}\|\leq \frac{N}{n}\cdot 2^{-\delta bn/4}\leq \epsilon,
\end{equation}
where $\eta:=\frac{N}{n}\cdot (m_{bn}-d_{bn})$. 
\end{theorem}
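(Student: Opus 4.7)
The plan is to adapt the induction from the proof of Theorem~\ref{thm:ext_rbs} to the equal-block, finite-length setting. Observe that $\Ext_{\rbs}^{\rmmeq}$ is exactly $\Ext_{\rbs}^{\rmmneq}$ specialized to $\Delta=0$, $n_1=n$, with the loop truncated after $N/n$ blocks; consequently the core inductive argument transfers almost verbatim. What remains is to stop the induction at $k=N/n$, trace out the final seed, and verify that the prescribed choice of $n$ forces the accumulated error below~$\epsilon$.

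Concretely, I would prove by induction on $k\in\{0,1,\ldots,N/n\}$ an analog of Eq.~\eqref{eq:induction} in which $I_{k+1:\infty}$ is replaced by $I_{k+1}\cup\cdots\cup I_{N/n}$, all block lengths are equal to $n$, and the right-hand side reads $k\cdot 2^{-\delta bn/4}$. The base case $k=0$ is immediate because $s^{(1)}$ is sampled uniformly and independently of $X\qstate$. For the inductive step I would mirror the two moves of the proof of Theorem~\ref{thm:ext_rbs}: first, contractivity of the trace-preserving quantum operation $(y,s)\mapsto\Ext_{\ggg}^{bn}(y,s)$ (applied to $X_{I_{k+1}}$ and $s^{(k+1)}$) propagates the inductive hypothesis to the post-processing state; second, the reverse-block condition yields $H_{\min}(X_{I_{k+1}}\mid X_{I_{k+2}\cup\cdots\cup I_{N/n}}=x,\qstate)\geq\delta bn$ for every assignment $x$, so Lemma~\ref{coro:gadget} applied to $\Ext_{\ggg}^{bn}$ (instantiated as $\Ext_{T'}^{bn}$) implies that the output, parsed as $(r^{(k+1)},s^{(k+2)})$, is $2^{-\delta bn/4}$-close to uniform and independent of $X_{I_{k+2}\cup\cdots\cup I_{N/n}}\qstate$. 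The triangle inequality closes the step by adding a further $2^{-\delta bn/4}$ to the error.

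Instantiating the resulting bound at $k=N/n$ and tracing out $s^{(N/n+1)}$ via contractivity of the partial trace yields
\begin{equation*}
\frac{1}{2}\|\rho_{r^{(1)}\circ\cdots\circ r^{(N/n)}\qstate}-\rho_{U_\eta}\otimes\rho_{\qstate}\|\leq\frac{N}{n}\cdot 2^{-\delta bn/4},
\end{equation*}
which is the first claimed inequality. The second inequality is a direct computation from $n=\lceil\frac{4}{\delta b}\log(N/\epsilon)\rceil$: this choice ensures $\delta bn/4\geq\log(N/\epsilon)$, hence $2^{-\delta bn/4}\leq\epsilon/N$, so $(N/n)\cdot 2^{-\delta bn/4}\leq\epsilon/n\leq\epsilon$.

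Since the heart of the argument already appears in the proof of Theorem~\ref{thm:ext_rbs}, there is no real obstacle; the only mild bookkeeping point is that the seed $s^{(k+1)}$ must be retained in the joint state throughout the induction (so that the $(k+1)$-th invocation of $\Ext_{\ggg}^{bn}$ is a well-defined quantum operation on its arguments) and discarded only after the last block via partial trace. Once that is handled, the calibration of the block length $n$ against the target error $\epsilon$ is routine.
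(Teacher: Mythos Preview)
Your proposal is correct and matches the paper's approach exactly: the paper gives no separate proof of Theorem~\ref{thm:ext_rbseq} but simply states that it follows ``via a similar argument as in Theorem~\ref{thm:ext_rbs},'' and you have correctly spelled out that argument---the same induction with all $n_\ell$ equal to $n$, the same contractivity-plus-reverse-block-plus-triangle inductive step, followed by tracing out the final seed and checking the choice of $n$ against $\epsilon$. One very minor remark: the inductive step only needs that $\Ext_\ggg^{bn}$ is a $(\delta bn,2^{-\delta bn/4})$ quantum-proof extractor (the standing assumption on the gadget family), so you need not invoke Lemma~\ref{coro:gadget} or the specific Toeplitz instantiation.
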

In particular, suppose $\Ext_{\rbs}^{\rmmeq}$ adopts $\Ext^n_{T'}$ as the gadget $\Ext_\ggg^n$. Then $\Ext_{\rbs}^{\rmmeq}$ extracts $(1+\delta)bn-1$ random bits from the $\ell$-th block $x_{I_{\ell}}$ and outputs the first $m_{bn}-d_{bn}=\delta b n/2$ bits. The last $d_{bn}=(1+\delta/2)bn-1$ bits, which is exactly the seed used in this block, will be reused as the seed in the next block. Therefore, $\Ext_{\rbs}^{\rmmeq}$ uses $d_{bn}\approx \left(4/\delta+2\right)\log\left(N/\epsilon\right)$ random bits as seed, and outputs $(N/n)\cdot (m_{bn}-d_{bn})\approx \delta bN/2$ bits in total. The total time costed is about $(N/n)\times n\log n=N\log n$ \cite{golub1996matrix}. So when $N$ is sufficiently large (e.g., $N\geq 2^{50}$ corresponding to a 100 Gbps QRNG working for several hours), $\Ext_{\rbs}^\rmmeq$ can achieve one order of magnitude faster than applying the Toeplitz-hashing extractor once to the whole raw data, which costs about $N\log N$ time. In addition, though $\Ext_{\rbs}^{\rmmeq}$ uses more seed than $\Ext_{\rbs}^{\rmmneq}$, it is more hardware-friendly and can achieve much higher extraction speed.


\begin{corollary}
  Suppose the random sources in Algorithms \ref{alg:rbseq} and \ref{alg:rbsneq} are replaced with the $(b,N,\delta,\epsilon_\mathrm{s})$ and $(b,\infty,\delta,\epsilon_\mathrm{s})$-smoothed reverse block sources, respectively. By using the same processing procedures, the output state of the extractor $\Ext_{\rbs}^{\rmmneq}$ satisfies
  \begin{equation}		\frac{1}{2}\|\rho_{r^{(1)}\circ r^{(2)}\circ\cdots\circ r^{(k)}\qstate}-\rho_{U_{\sm_k}}\otimes \rho_{\qstate}\|
  <\frac{2^{-\delta b n_1/4}}{1-2^{-\delta b\Delta/4}}+2\epsilon_s,
  \end{equation}
  and the output state of the extractor $\Ext_{\rbs}^{\rmmeq}$ satisfies
  \begin{equation}
  \frac{1}{2}\|\rho_{r^{(1)}\circ r^{(2)}\circ\cdots\circ r^{(N/n)}\circ\qstate}-\rho_{U_{\eta}}\otimes \rho_{\qstate}\|
  \leq\epsilon+2\epsilon_s.
\end{equation}
\end{corollary}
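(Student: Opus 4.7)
The strategy is to reduce both claims to Theorems \ref{thm:ext_rbs} and \ref{thm:ext_rbseq} by combining the definition of a smoothed reverse block source with the monotonicity of the trace norm under completely positive trace-preserving (CPTP) maps. I would argue about the two extractors in parallel, since the only difference between the two cases is which base theorem gets invoked in the middle of the argument.

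First, by the definition of a $(b,N,\delta,\epsilon_s)$-smoothed reverse block source, there exists a state $\rho^*_{X\qstate}$ that is $\epsilon_s$-close to $\rho_{X\qstate}$ in trace distance and that is an \emph{exact} $(b,N,\delta)$-reverse block source. I would then run Algorithm \ref{alg:rbseq} (resp.\ Algorithm \ref{alg:rbsneq}) on the idealized input $\rho^*_{X\qstate}$ in parallel with the actual input $\rho_{X\qstate}$, using the same independently drawn initial seed $S$ and the same deterministic post-processing. The whole pipeline---sampling $S$, computing $r^{(1)},r^{(2)},\ldots$ by the stated recursion, and discarding internal registers---is a single CPTP map $\Lambda$ acting on $X\qstate$, because $\qstate$ is never touched by the extractor and $S$ is independent of the input. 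Contractivity of the trace norm then gives
\[\tfrac{1}{2}\|\Lambda(\rho_{X\qstate})-\Lambda(\rho^*_{X\qstate})\|\leq \tfrac{1}{2}\|\rho_{X\qstate}-\rho^*_{X\qstate}\|\leq\epsilon_s.\]

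Second, since $\rho^*$ is a genuine reverse block source, Theorem \ref{thm:ext_rbs} (for $\Ext_{\rbs}^{\rmmneq}$) or Theorem \ref{thm:ext_rbseq} (for $\Ext_{\rbs}^{\rmmeq}$) applies directly to the output $\Lambda(\rho^*_{X\qstate})$ and yields
\[\tfrac{1}{2}\|\Lambda(\rho^*_{X\qstate})-\rho_{U}\otimes\rho^*_{\qstate}\|\leq B,\]
where $B=\tfrac{2^{-\delta b n_1/4}}{1-2^{-\delta b\Delta/4}}$ in the first case and $B=\epsilon$ in the second. A further application of contractivity, this time under the partial trace over the output register, gives $\tfrac{1}{2}\|\rho^*_{\qstate}-\rho_{\qstate}\|\leq\epsilon_s$ and hence $\tfrac{1}{2}\|\rho_U\otimes\rho^*_{\qstate}-\rho_U\otimes\rho_{\qstate}\|\leq\epsilon_s$. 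Gluing the three inequalities together by the triangle inequality produces the stated bound with the additive $2\epsilon_s$ term.

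The main obstacle is bookkeeping rather than anything conceptual: I need to verify that the extractor truly acts identically on the two hypothetical inputs (same seed distribution, same deterministic updates producing $s^{(\ell+1)}$ and $r^{(\ell)}$), and that $\qstate$ is only ever passively carried along, so that one CPTP map $\Lambda$ captures the whole process and contractivity of trace distance can be applied uniformly. Both properties are immediate from the descriptions in Algorithms \ref{alg:rbseq} and \ref{alg:rbsneq}, so the full proof should amount to no more than three trace-distance inequalities stitched together by the triangle inequality.
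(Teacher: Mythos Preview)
Your proposal is correct and follows essentially the same approach as the paper: both invoke the defining nearby state $\rho^*_{X\qstate}$, apply the relevant base theorem to it, and then stitch together three trace-distance inequalities (contractivity of the extractor map, the base bound, and contractivity of the partial trace onto $\qstate$) via the triangle inequality to pick up the additive $2\epsilon_s$. Your framing of the whole pipeline as a single CPTP map $\Lambda$ is a clean way to justify the first contractivity step and matches what the paper does implicitly.
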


\begin{proof}
  We prove the smoothed version of Algorithm~\ref{alg:rbsneq}, and the proof for the smoothed version of Algorithm~\ref{alg:rbseq} follows essentially the same procedures. For brevity, denote $\frac{2^{-\delta b n_1/4}}{1-2^{-\delta b\Delta/4}}:=\epsilon$. 
  According to the definition of the smoothed reverse block source, there exists a state $\rho^*_{X\qstate}$ that is $\epsilon_s$-close to the real output state $\rho_{X\qstate}$ such that $\rho^*_{X\qstate}$ determines a $(b,\infty,\delta)$ reverse block source. Using the result in Theorem~\ref{thm:ext_rbs},
\begin{equation}\label{eq:smrbs_converge}
\begin{aligned}
&\frac{1}{2}\left\|\rho_{r^{(1)}\circ r^{(2)}\circ\cdots\circ r^{(k)}\circ s^{(k+1)}\circ X_{I_{k+1:\infty}}\qstate}-\rho_{U_{\sm_k+d_{bn_{k+1}}}}\otimes \rho_{X_{I_{k+1:\infty}}\qstate}\right\| \\
\leq & \frac{1}{2}\left\|\rho_{r^{(1)}\circ r^{(2)}\circ\cdots\circ r^{(k)}\circ s^{(k+1)}\circ X_{I_{k+1:\infty}}\qstate}-\rho^*_{r^{(1)}\circ r^{(2)}\circ\cdots\circ r^{(k)}\circ s^{(k+1)}\circ X_{I_{k+1:\infty}}\qstate}\right\| \\
&+ \frac{1}{2}\left\|\rho^*_{r^{(1)}\circ r^{(2)}\circ\cdots\circ r^{(k)}\circ s^{(k+1)}\circ X_{I_{k+1:\infty}}\qstate}-\rho_{U_{\sm_k+d_{bn_{k+1}}}}\otimes \rho^*_{X_{I_{k+1:\infty}}\qstate}\right\| \\
& + \frac{1}{2}\left\|\rho_{U_{\sm_k+d_{bn_{k+1}}}}\otimes \rho^*_{X_{I_{k+1:\infty}}\qstate} - \rho_{U_{\sm_k+d_{bn_{k+1}}}}\otimes \rho_{X_{I_{k+1:\infty}}\qstate}\right\| \\
\leq &\epsilon_\mathrm{s}+\epsilon+\epsilon_\mathrm{s}=\epsilon +2 \epsilon_\mathrm{s},
\end{aligned}
\end{equation}
where the first inequality is due to the contractivity of trace-preserving quantum operations while the last inequality comes from the fact that the purified distance is an upper bound of the trace distance.
\end{proof}
This result implies that the output for a smoothed reverse block source can also be arbitrarily close to a uniform distributed sequence.

\section{Simulations of the real-time randomness generation rate}\label{sec:simulation}
In this section, we make a simulation estimating the extraction speed of the first extractor $\Ext_{\rbs}^{\rmmeq}$ implemented in the Xilinx Kintex-7 XC7K480T Field Programmable Gate Array (FPGA), a common application in industry. Here, we adopt the expanded Toeplitz-hashing extractor $\Ext^n_{T'}$ as the gadget $\Ext_{\ggg}^n$. We use the raw data from Ref.~\cite{gehring2021homodyne} as the random source, which is a reverse block source with parameters $b=16$ and $\delta=10.74/16\approx 0.67$. We consider a raw data length of $N=2^{51}$ and a total security parameter $\epsilon=2^{-30}$, which means the final output data is $2^{-30}$-close to a uniform distribution. Correspondingly, the block length is $n=\left\lceil 4/(\delta b)\cdot \log\left(N/\epsilon\right)\right\rceil=31$.

According to the discussion after Theorem \ref{thm:ext_rbseq}, $\Ext_{\rbs}^{\rmmeq}$ will use a seed of $(1+\delta/2)bn-1=662$ bits and output about $10.74$ PB random bits. Precisely, $\Ext_{\rbs}^{\rmmeq}$ simply divides the input raw data into blocks with length $31$, where each block contains $b\times n=16*31=496$ bits. The output length is $m = \delta n/2 = 167$ bits. Then the calculation in each block is a $167\times 496$ Toeplitz matrix multiplied by a $496$-dimension vector over $GF(2)$. Since the addition $a+b$ and multiplication $a\cdot b$ over $GF(2)$ are exactly $a\oplus b$ and $a\wedge b$, respectively, which are both basic logical operations, the matrix multiplication involves $167\times 496=82832$ `$\wedge$' operations and $82832$ `$\oplus$' operations.


The parameters of the Xilinx Kintex-7 XC7K480T FPGA are as follows. The clock rate is set to be $200$ MHz; the number of Look-Up-Tables (LUTs) is $3\times 10^5$; each LUT can perform $5$ basic logical operations simultaneously.  To make full use of the FPGA, we can perform the matrix multiplications of $\lfloor 3\times 10^5 \times 5/(2*82832)\rfloor=9$ blocks in parallel. 
Therefore, the extraction speed of $\Ext^{\rmmeq}_{\rbs}$ is $200\times 10^6\times 9\times 167 \approx 300$ Gbps, which is improved by one order of magnitude compared to the state-of-the-art result with a speed of 18.8 Gbps \cite{bai202118}. As a result, the extraction speed is not a bottleneck for the high-speed QRNGs any more; hence our online extraction is adequate for the post-processing of the fastest known implementation of QRNGs \cite{qi2010high,xu2012ultrafast}.

\section{Conclusion}

In conclusion, we design two novel quantum randomness extractors based on the reverse-block-source property that is satisfied by a large class of trusted-device QRNGs. These results provide theoretical supports to the current real-time block-wise post-processing widely applied in experiments and industry. The first extractor improves the real-time extraction speed while the second one can extract infinite raw data with only a constant seed length. In particular, the first extractor is easy to be implemented in a FPGA. The real-time extraction speed with a common FPGA is high enough for the real-time post-processing of current QRNGs.

For future work, it is interesting to explore other properties beyond the general min-entropy source to improve the post-processing. The improvement may come from boosting the extraction speed or saving the seed length. On the other hand, randomness extraction with an imperfect seed or even without seed is also a practical and promising direction. An interesting open question is whether randomness can still be extracted \emph{online} without the seed for certain non-trivial random sources. Finally, it is interesting to apply novel extractors as the gadget in our framework for fewer seeds or faster real-time extraction speed, such as the extractors proposed in \cite{hayashi2016more} and \cite{huang2022stream}.

\section*{Acknowledgments}
We thank Y. Nie and B. Bai for enlightening discussions, and Salil Vadhan for telling us the details of the extractor which extracts randomness from Santha-Vazirani sources using a seed of constant length. This work was supported in part by the National Natural Science Foundation of China Grants No. 61832003, 61872334, 61801459, 62002229, 1217040781, and the Strategic Priority Research Program of Chinese Academy of Sciences Grant No. XDB28000000. Qian Li was additionally supported by Hetao Shenzhen-Hong Kong Science and Technology Innovation Cooperation Zone Project (No.HZQSWS-KCCYB-2022046).
\onecolumngrid

\appendix

\section{QRNG raw data as a reverse block source}\label{app:rbs}

In this section, we prove that the raw data of QRNGs based on homodyne detection~\cite{gehring2021homodyne} satisfy the reverse block source. The settings are shown in Fig.~\ref{fig:vacuum}, where a homodyne detector is employed to measure the $X$ quadrature of a vacuum state. Since the vacuum state is a Gaussian state, the output is supposed to follow a Gaussian distribution. The homodyne detector is concatenated by an analog-to-digital converter, which transforms the continuous voltage signal into discrete raw data. Due to the finite bandwidth of the photo detector, there will be overlaps between different input signals in temporal domain, which leads to correlations in the raw data.
\begin{figure}[htbp]
	\centering
	\includegraphics[scale=0.7]{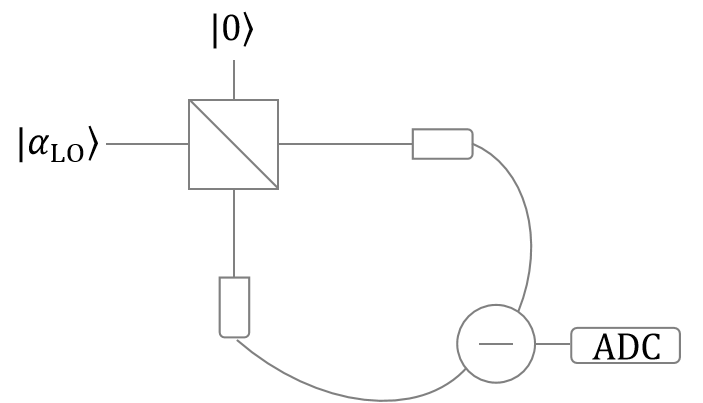}
	\caption{Settings of the QRNG based on homodyne detection. We use $\ket{0}$ and $\ket{\alpha_{LO}}$ to represent the vacuum state and local oscillator, respectively.}
	\label{fig:vacuum}
\end{figure}

We introduce the quantities involved in the QRNG based on homodyne detection. We use $X$ to represent the continuous output of the homodyne detector.
which consists of two components: the quantum quadrature $Q$ and the excess noise $U$, i.e., $X=Q+U$.
Both $X$ and $U$ are assumed to follow stationary Gaussian processes. The output at time $t$ is denoted by $X_t$. For arbitrary $t$, the variance of $X_t$ is given by
\begin{equation}\label{appeq:variance}
    \var(X_t) = \sigma_X^2 + \zeta_t,
\end{equation}
where $\zeta_t$ accounts for the fluctuations of the past outputs $X_{<t}$ and $\sigma_X^2$ accounts for the fluctuations independent of the past, i.e., the variance of $X_t$ conditioned on $X_{<t}$. More specifically, $p(X_1,X_2,...X_t)$ follows a multivariate Gaussian distribution
and 
\begin{equation}\label{appeq:conditional_prob}
p_{X_t}(x_t| X_{<t}=x_{<t})= G(x_t;\mu_t,\sigma_X^2).    
\end{equation}
where $\mu_t$ depends on the past outputs $x_{<t}$ and 
\begin{equation}
G(x;\mu,\nu^2)=\frac{1}{\sqrt{2\pi \nu^2}}e^{-\frac{(x-\mu)^2}{2\nu^2}}.    
\end{equation}
denotes a Gaussian in the variable $x$, with mean $\mu$, and variance $\nu^2$. For the excess noise $U_t$, we can similarly write the probability
density distribution conditioned on past values, i.e.,
\begin{equation}
p_{U_t}(u_t| U_{<t}=u_{<t})= G(u_t;\nu_t,\sigma_U^2).    
\end{equation}
The raw data $X$ is discretized by an $b$-bit ADC. We denote the random variable after coarse-graining as $\bar{X}$, whose probability distribution is given by
\begin{equation}
p_{\bar{X}}(j) =\int _{x\in I_j} p_{X}(x)dx.  
\end{equation}
The intervals of the ADC $I_j$ $(j\in \{1,2,\dotsc, 2^b-1\})$  is determined by
\begin{equation}
    \begin{aligned}
        I_1 & =(-\infty, -R) \\
        I_{2^b} & = (R,\infty) \\
        I_j & = \left(-R+\frac{2R(j-2)}{2^b-2},-R+\frac{2R(j-1)}{2^b-2} \right), \quad j\in\{2,3,\dotsc, 2^b-1\}
    \end{aligned}
\end{equation}
The voltage range $R$ is heuristically chosen. 


Eve's knowledge about different samples is assumed to be independent \cite{gehring2021homodyne}. That is, the joint state of $X_{1}\cdots X_{N}\mathcal{E}$ can be written as
\begin{equation}
 \rho_{X_1\cdots X_{N}\mathcal{E}}=\sum_{x_1\cdots x_{N}}P_{x_1\cdots x_{N}} \ket{x_1\cdots x_{N}}\bra{x_1\cdots x_{N}}\otimes \rho_{\mathcal{E}_1}^{x_1}\otimes \cdots\otimes\rho_{\mathcal{E}_N}^{x_N}.   
\end{equation}
Furthermore, $\mathcal{E}_i$ is assumed to be independent with $x_j$ if $i\neq j$. 

In the following, we will give a lower bound on the min-entropy $H_{\min}(\bar{X}_{k},\bar{X}_{k+1},\cdots,\bar{X}_{i}| \bar{X}_{i+1}=\bar{x}_{i+1},\cdots, \bar{X}_N=\bar{x}_{N},\mathcal{E})$. Lemma \ref{lem:chainrule} will be used.
\begin{lemma}[\cite{gehring2021homodyne}]\label{lem:minent}
Let $Y\mathcal{E}$ be a cq-state where $\rho_{Y\mathcal{E}}=\sum_y P_y \ket{y}\bra{y}\otimes \rho_{\mathcal{E}}^y$. Then
\begin{equation}
 H_{\min}(Y|\mathcal{E})_{\rho}=\sup_{\gamma}\left[-\log_2\sup_{y}\left(P_y\cdot\|\gamma^{-1/2}\rho_{\mathcal{E}}^x\gamma^{-1/2}\|_{\infty}\right)\right],   
\end{equation}
where $\|\cdot\|_{\infty}$ denotes the operator norm (equal to the value of the maximum eigenvalue), and the first supremum is over a density operator $\gamma$ on $\mathcal{E}$.
\end{lemma}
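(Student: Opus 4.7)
The plan is to derive the variational formula by combining the operational guessing-probability definition of $H_{\min}(Y|\mathcal{E})$ with the semidefinite programming (SDP) dual characterization due to König--Renner--Schaffner, which I would invoke as a black box:
\[
2^{-H_{\min}(Y|\mathcal{E})_\rho} = \min\{\mathrm{Tr}(\sigma_\mathcal{E}) : \rho_{Y\mathcal{E}} \leq \mathbb{1}_Y \otimes \sigma_\mathcal{E},\ \sigma_\mathcal{E}\geq 0\}.
\]
This identity is equivalent to the POVM definition given earlier in the paper by strong duality of the guessing-probability SDP, so the real work is just rewriting the right-hand side in the form claimed by the lemma.

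First I would specialize to a cq-state. Since $\rho_{Y\mathcal{E}} = \sum_y P_y |y\rangle\langle y|\otimes \rho_\mathcal{E}^y$ is block-diagonal in the classical register, the constraint $\rho_{Y\mathcal{E}}\leq \mathbb{1}_Y\otimes \sigma_\mathcal{E}$ decouples into the family of operator inequalities $P_y \rho_\mathcal{E}^y \leq \sigma_\mathcal{E}$ for every $y$ in the support of $P_Y$. Next I would parametrize any feasible $\sigma_\mathcal{E}$ as $\sigma_\mathcal{E}= t\,\gamma$ with $t=\mathrm{Tr}(\sigma_\mathcal{E})$ and $\gamma$ a density operator, so that the dual becomes a joint optimization over $(t,\gamma)$.

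Then I would conjugate each inequality by $\gamma^{-1/2}$, turning $P_y\rho_\mathcal{E}^y \leq t\gamma$ into $P_y\, \gamma^{-1/2}\rho_\mathcal{E}^y\gamma^{-1/2} \leq t\,\mathbb{1}$, which is equivalent to $t\geq P_y\|\gamma^{-1/2}\rho_\mathcal{E}^y\gamma^{-1/2}\|_\infty$. Taking the supremum over $y$ produces the minimal $t$ compatible with a fixed $\gamma$, namely $t^*(\gamma)=\sup_y P_y\|\gamma^{-1/2}\rho_\mathcal{E}^y\gamma^{-1/2}\|_\infty$, and minimizing over $\gamma$ and taking $-\log_2$ (turning the $\inf_\gamma$ into a $\sup_\gamma$ outside the logarithm) yields exactly the claimed formula.

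The step I expect to be delicate is the conjugation by $\gamma^{-1/2}$ when $\gamma$ is not full-rank: if some $\rho_\mathcal{E}^y$ has support outside $\mathrm{supp}(\gamma)$, then the operator $\gamma^{-1/2}\rho_\mathcal{E}^y\gamma^{-1/2}$ is unbounded and the original constraint $P_y\rho_\mathcal{E}^y\leq t\gamma$ is infeasible for any finite $t$. I would handle this by the standard convention $\|\gamma^{-1/2}\rho_\mathcal{E}^y\gamma^{-1/2}\|_\infty=+\infty$ when $\mathrm{supp}(\rho_\mathcal{E}^y)\not\subseteq \mathrm{supp}(\gamma)$, which correctly removes such $\gamma$ from the outer infimum; alternatively, one can restrict attention to $\gamma$ whose support contains $\sum_y \rho_\mathcal{E}^y$, which is without loss of generality because an optimal $\sigma_\mathcal{E}^*$ in the dual SDP necessarily dominates $P_y\rho_\mathcal{E}^y$ for every $y$ and hence has a support containing all the $\rho_\mathcal{E}^y$. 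Aside from this support bookkeeping, the remainder of the argument is a routine manipulation of operator inequalities and scalar optima.
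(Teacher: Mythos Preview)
Your argument is correct. The dual SDP characterization you invoke is exactly the original Renner definition of conditional min-entropy, shown equivalent to the guessing-probability formulation by K\"onig--Renner--Schaffner; specializing to the block-diagonal cq-structure, rescaling $\sigma_{\mathcal{E}}=t\gamma$, and conjugating by $\gamma^{-1/2}$ is the standard way to obtain the stated variational formula, and your treatment of the support issue is the right one.

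There is nothing to compare against here: the paper does not prove this lemma at all but simply quotes it from \cite{gehring2021homodyne} as a black box and immediately uses it to establish Lemma~\ref{lem:chainrule}. Your write-up therefore supplies more than the paper does. One minor remark: the statement as printed contains a typo ($\rho_{\mathcal{E}}^x$ should be $\rho_{\mathcal{E}}^y$), and the outer optimization is really an infimum over $\gamma$ that becomes a supremum only after pulling through the $-\log_2$; you handled this correctly, but it is worth flagging when you present the proof.
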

\begin{lemma}\label{lem:chainrule}
Let $(X_1,X_2,\mathcal{E}_1,\mathcal{E}_2)$ be a ccqq-state where 
\begin{equation}
\rho_{X_1X_2\mathcal{E}_1\mathcal{E}_2}=\sum_{x_1,x_2}P_{x_1,x_2}\ket{x_1,x_2}\bra{x_1x_2}\otimes \rho_{\mathcal{E}_1}^{x_1}\otimes \rho_{\mathcal{E}_2}^{x_2}.   
\end{equation}
If the following holds:
\begin{itemize}
\item $H_{\min}(X_1 |\mathcal{E}_1)\geq \delta_1$, and
\item there is a universal density operator $\gamma_2$ on $\mathcal{E}_2$ such that for any possible value $x_1$ of $X_1$,
\begin{equation}
\sup_{x_2}\left(P_{x_2| x_1}\cdot\|\gamma_2^{-1/2}\rho_{\mathcal{E}_2}^{x_2}\gamma_2^{-1/2}\|_{\infty}\right)\leq 2^{-\delta_2}.    
\end{equation}
That is, a universal $\gamma_2$ makes $H_{\min}(X_2| X_1=x_1,\mathcal{E}_2)\geq \delta_2$ for any $x_1$.
\end{itemize}
Then we have 
\begin{equation}
   H_{\min}(X_1,X_2|\mathcal{E}_1,\mathcal{E}_2)\geq \delta_1+\delta_2. 
\end{equation}
\end{lemma}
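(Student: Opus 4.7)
The plan is to apply the variational formula of Lemma~\ref{lem:minent} directly to the composite system $(X_1 X_2, \mathcal{E}_1 \mathcal{E}_2)$ using the product ansatz $\gamma = \gamma_1 \otimes \gamma_2$, where $\gamma_2$ is the universal density operator furnished by the second hypothesis and $\gamma_1$ is an almost-optimizer of the witness for $H_{\min}(X_1|\mathcal{E}_1) \geq \delta_1$. The point of choosing this product form is that it will split the guessing probability multiplicatively into precisely the two factors controlled by the hypotheses.

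First I will unpack the conjugation. Because the joint adversary's conditional state factorizes as $\rho_{\mathcal{E}_1}^{x_1} \otimes \rho_{\mathcal{E}_2}^{x_2}$ and $(\gamma_1 \otimes \gamma_2)^{-1/2} = \gamma_1^{-1/2} \otimes \gamma_2^{-1/2}$, the conjugation distributes over the tensor product:
\begin{equation*}
(\gamma_1 \otimes \gamma_2)^{-1/2}\bigl(\rho_{\mathcal{E}_1}^{x_1} \otimes \rho_{\mathcal{E}_2}^{x_2}\bigr)(\gamma_1 \otimes \gamma_2)^{-1/2} = M^{x_1} \otimes N^{x_2},
\end{equation*}
with $M^{x_1} := \gamma_1^{-1/2}\rho_{\mathcal{E}_1}^{x_1}\gamma_1^{-1/2}$ and $N^{x_2} := \gamma_2^{-1/2}\rho_{\mathcal{E}_2}^{x_2}\gamma_2^{-1/2}$. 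Since the operator norm is multiplicative on tensor products of positive operators, $\|M^{x_1}\otimes N^{x_2}\|_\infty = \|M^{x_1}\|_\infty \|N^{x_2}\|_\infty$, and since $P_{x_1,x_2} = P_{x_1} P_{x_2|x_1}$, the quantity inside the outer supremum of Lemma~\ref{lem:minent} factors as
\begin{equation*}
P_{x_1,x_2}\,\bigl\|(\gamma_1\otimes\gamma_2)^{-1/2}(\rho_{\mathcal{E}_1}^{x_1}\otimes \rho_{\mathcal{E}_2}^{x_2})(\gamma_1\otimes\gamma_2)^{-1/2}\bigr\|_\infty = \bigl(P_{x_1}\|M^{x_1}\|_\infty\bigr)\bigl(P_{x_2|x_1}\|N^{x_2}\|_\infty\bigr).
\end{equation*}

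Next, I will take $\sup_{x_1,x_2}$ and bound it by the product of partial suprema. The first hypothesis, via Lemma~\ref{lem:minent}, lets me pick $\gamma_1$ so that $\sup_{x_1} P_{x_1}\|M^{x_1}\|_\infty \leq 2^{-\delta_1 + \varepsilon}$ for any slack $\varepsilon > 0$. The second hypothesis states that $\sup_{x_2} P_{x_2|x_1}\|N^{x_2}\|_\infty \leq 2^{-\delta_2}$ for \emph{every} $x_1$, so this bound survives the additional supremum over $x_1$. Multiplying the two bounds, plugging the result back into Lemma~\ref{lem:minent}, taking $-\log_2$, and letting $\varepsilon \to 0$ gives $H_{\min}(X_1 X_2|\mathcal{E}_1\mathcal{E}_2) \geq \delta_1 + \delta_2$.

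The main conceptual ingredient — and essentially the only place where any care is needed — is the uniformity built into the second hypothesis: the \emph{same} witness $\gamma_2$ must work for every $x_1$, otherwise the supremum over $x_1$ could not be commuted across the product above. Beyond that, the calculation is a routine exploitation of the tensor-product structure, with the only mild technical nuisance being that the outer supremum in Lemma~\ref{lem:minent} may not be attained, which is absorbed by the $\varepsilon$-slack used above.
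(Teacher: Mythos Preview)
Your proof is correct and follows exactly the same route as the paper's: choose the product witness $\gamma=\gamma_1\otimes\gamma_2$, factor both $P_{x_1,x_2}$ and the conjugated tensor, bound the two factors separately by the two hypotheses, and invoke Lemma~\ref{lem:minent}. Your version is actually slightly more careful than the paper's, which tacitly assumes the outer supremum in Lemma~\ref{lem:minent} is attained by some $\gamma_1$; your $\varepsilon$-slack argument handles this cleanly.
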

\begin{proof}
Let $\gamma_1$ denote a density operator on $\mathcal{E}_1$ to make $H_{\min}(X_1| \mathcal{E}_1)\geq \delta_1$, i.e., 
\begin{equation}
\sup_{x_1}\left(P_{x_1}\cdot\|\gamma_1^{-1/2}\rho_{\mathcal{E}_1}^{x_1}\gamma_1^{-1/2}\|_{\infty}\right)\leq 2^{-\delta_1}.  
\end{equation}
Define $\gamma=\gamma_1\otimes \gamma_2$, then we have
\begin{equation}
   \begin{aligned}
&\sup_{x_1,x_2}\left(P_{x_1,x_2}\cdot\|\gamma^{-1/2}\left(\rho_{\mathcal{E}_1}^{x_1}\otimes\rho_{\mathcal{E}_2}^{x_2}\right)\gamma^{-1/2}\|_{\infty}\right)\\
= &\sup_{x_1,x_2}\left(P_{x_1}\|\gamma_1^{-1/2}\rho_{\mathcal{E}_1}^{x_1}\gamma_1^{-1/2}\|_{\infty}\cdot P_{x_2| x_1}\|\gamma_2^{-1/2}\rho_{\mathcal{E}_2}^{x_2}\gamma_2^{-1/2}\|_{\infty}\right)\\
\leq &\sup_{x_1,x_2}\left(P_{x_1}\|\gamma_1^{-1/2}\rho_{\mathcal{E}_1}^{x_1}\gamma_1^{-1/2}\|_{\infty}\right)\cdot \sup_{x_1,x_2} \left(P_{x_2| x_1}\|\gamma_2^{-1/2}\rho_{\mathcal{E}_2}^{x_2}\gamma_2^{-1/2}\|_{\infty}\right)\\
\leq & 2^{-\delta_1-\delta_2}.  
\end{aligned} 
\end{equation}
Then we get the conclusion immediately by Lemma \ref{lem:minent}.
\end{proof}
From the argument in \cite{gehring2021homodyne}, we have that for any $1\leq t\leq N$, there is a universal density operator $\gamma_t$ on $\mathcal{E}_t$ such that for any $\bar{x}_{>t}$,
\begin{equation}\label{eq:base}
\sup_{\bar{x}_t}\left(P_{\bar{x}_t| \bar{x}_{>t}}\cdot\|\gamma_t^{-1/2}\rho_{\mathcal{E}_t}^{\bar{x}_t}\gamma_t^{-1/2}\|_{\infty}\right) = \sup_{\bar{x}_t}\left(P_{\bar{x}_t| \bar{x}_{<t}}\cdot\|\gamma_t^{-1/2}\rho_{\mathcal{E}_t}^{\bar{x}_t}\gamma_t^{-1/2}\|_{\infty}\right)\leq 2^{-\delta^\ast},
\end{equation}
where the first equation is due to the time-reversal symmetry of the temporal signals, the quantities $\delta^*$ and $n$ are given by
\begin{equation}\label{appeq:entropy_lower_bound1}
\begin{aligned}
\delta^\ast& =-\log \left[(\sqrt{n}+\sqrt{n+1})^2\cdot\mathrm{erf}\left(\frac{1}{(2^b-2)g_{\ast}}\right) \right] \\
n& =\frac{1}{2}\frac{\var(X_t)}{\sigma_{X}^2-\sigma_U^2}-\frac{1}{2}
\end{aligned}
\end{equation}
and $g_{\ast}$ is implicitly defined by the equation 
\begin{equation}\label{appeq:entropy_lower_bound2}
\mathrm{erf}\left(\frac{1}{(2^b-2)g_{\ast}}\right)=\frac{1}{2}\mathrm{erf}\left(\frac{R}{g_{\ast}}\right).   
\end{equation}
From Eq.~\eqref{appeq:entropy_lower_bound1}, we can see that the upperbound given by Eq.~\eqref{eq:base} is independent of the mean of $P_{x_t| x_{<t}}$, 
i.e., $\mu_t$ in Eq.~\eqref{appeq:conditional_prob}, but depends on $\sigma_X^2$, $\sigma_U^2$, and $\var(X_t)$, which are identical in each round. Therefore for an arbitrary $t$,
\begin{equation}
H_{\min}(\bar{X}_t| \bar{X}_{>t}=\bar{x}_{>t},\mathcal{E}_t)\geq \delta^\ast.
\end{equation}
Then by apply Lemma \ref{lem:chainrule} repeatedly, we have that for any $\bar{x}_{i+1},\cdots,\bar{x}_{N}$
\begin{equation}
H_{\min}(\bar{X}_{k},\bar{X}_{k+1},\cdots,\bar{X}_{i}| \bar{X}_{i+1}=\bar{x}_{i+1},\cdots, \bar{X}_N=\bar{x}_{N},\mathcal{E})\geq (i-k+1)\delta^\ast.    
\end{equation}

\section{Initial seed is indispensable for reverse block sources}\label{app:seed_is_necessary}
The following theorem claims that there does not exist any deterministic extractor that can extract even one bit of almost uniformly-distributed random number from every reverse block source. The proof is essentially the same as that for Santha-Vazirani sources \cite{sv1,sv2}.
\begin{theorem}\label{lem:no-deterministic}
For all $b,N\in\mathbb{N}^+$ and $0<\delta<1$, and any function  $\Ext:\{0,1\}^{bN}\rightarrow\{0,1\}$, there exists a $(b,N,\delta)$-reverse block source $X$ such that either $\Pr[\Ext(X) = 1]\geq 2^{-\delta}$ or $\Pr[\Ext(X)=1]\leq 1-2^{-\delta}$.
\end{theorem}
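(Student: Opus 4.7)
The plan is to prove the claim by induction on $N$, following the classical adversarial-source argument of Santha--Vazirani adapted to $b$-bit blocks.

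For the base case $N=1$, the source is a single $b$-bit block whose distribution $\mu$ satisfies $\max_x \mu(x) \leq 2^{-\delta b}$. Letting $S := \Ext^{-1}(1)$, at least one of $S, S^c$ has size $\geq 2^{b-1}$; without loss of generality suppose $|S| \geq 2^{b-1}$. Placing probability $2^{-\delta b}$ on as many elements of $S$ as possible, and distributing any residual mass so that the max probability stays $\leq 2^{-\delta b}$, yields a valid source with $\Pr[\Ext(X)=1] \geq \min(1,\, |S|\cdot 2^{-\delta b}) \geq \min(1,\, 2^{b(1-\delta)-1}) \geq 2^{-\delta}$, where the last inequality uses $b\geq 1$. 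The case $|S^c|\geq 2^{b-1}$ is symmetric and yields $\Pr[\Ext(X)=1]\leq 1-2^{-\delta}$.

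For the inductive step, I would fix $\Ext$ and consider the restrictions $\Ext_y(x_1,\ldots,x_{N-1}) := \Ext(x_1,\ldots,x_{N-1},y)$ for each $y \in \{0,1\}^b$. By the induction hypothesis applied to each $\Ext_y$, there exists either a $(b,N{-}1,\delta)$-reverse block source $\mu_y^+$ with $\Pr[\Ext_y(\mu_y^+)=1]\geq 2^{-\delta}$, or a source $\mu_y^-$ with $\Pr[\Ext_y(\mu_y^-)=1]\leq 1-2^{-\delta}$. Partition $\{0,1\}^b = A \sqcup A^c$ with $A$ the set of $y$ for which the former holds. If $|A|\geq 2^{\delta b}$, I would let $X_N$ be uniform over a $2^{\delta b}$-element subset of $A$ (which has min-entropy exactly $\delta b$) and, conditionally on $X_N = y$, run $\mu_y^+$ on the first $N{-}1$ blocks. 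By linearity, the resulting $(b,N,\delta)$-reverse block source gives $\Pr[\Ext(X)=1]\geq 2^{-\delta}$. The symmetric case $|A^c|\geq 2^{\delta b}$ yields $\Pr[\Ext(X)=1]\leq 1-2^{-\delta}$ via the $\mu_y^-$'s.

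The main obstacle is the edge case where both $|A|$ and $|A^c|$ are strictly smaller than $2^{\delta b}$, which can occur only when $b(1-\delta)<1$ (most notably for $b=1$). Here the clean uniform constructions above do not immediately apply, and I would instead track the concrete achievable extrema $M^+(\Ext_y) := \max_\mu \Pr[\Ext_y(\mu)=1]$ and $M^-(\Ext_y)$ furnished by the IH. The idea is to design the distribution of $X_N$ to place mass $2^{-\delta b}$ on a selected element of $A$ and of $A^c$, with the residual mass spread admissibly, and then compare two candidate sources: one using $\mu_y^+$ on the $A$-branch to push $\Pr[\Ext(X)=1]$ upward, and one using $\mu_y^-$ on the $A^c$-branch to push it downward. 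A short case analysis on the values $M^-(\Ext_y)$ for $y\in A$ and $M^+(\Ext_y)$ for $y\in A^c$ then shows that at least one of the two required inequalities must hold; this step is the delicate but standard arithmetic of the classical Santha--Vazirani argument of \cite{sv1, sv2}.
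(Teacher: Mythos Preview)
Your inductive approach is a genuinely different route from the paper's: the paper gives a one-shot explicit construction, while you attempt to build the source block by block. However, there is a real gap in your proposal at exactly the edge case you flag.

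The paper's proof is direct. Assuming $|\Ext^{-1}(1)|\geq 2^{bN-1}$, it fixes any $S\subseteq \Ext^{-1}(1)$ with $|S|=2^{bN-1}$ and takes $X$ to be uniform on $S$ with probability $2^{-\delta}$ and uniform on $\{0,1\}^{bN}\setminus S$ otherwise. Then $\Pr[\Ext(X)=1]\geq 2^{-\delta}$ is immediate, and the reverse block property follows from the single observation that $\Pr[X=x]/\Pr[X=x']\leq 2^{-\delta}/(1-2^{-\delta})$ for every pair $x,x'$, which yields $\Pr[X_i=x_i\mid X_{i+1},\dots,X_{bN}]\leq 2^{-\delta}$ bit-by-bit and hence the block bound by the chain rule. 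No induction, no edge case.

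Your induction, by contrast, does not close with the hypothesis you carry. In the edge case $|A|,|A^c|<2^{\delta b}$ (which, as you note, always occurs for $b=1$), the two ``obvious'' constructions are already optimal among all admissible marginals on $X_N$: with $p=2^{-\delta}$ and $q=1-p$, the best you can force upward is $pM_0^+ + qM_1^+$ and the best downward is $qm_0^- + pm_1^-$. But the bare inductive hypothesis only gives $M_0^+\geq p$, $M_1^+<p$, $m_0^->q$, $m_1^-\leq q$, and these constraints alone do \emph{not} preclude both $pM_0^++qM_1^+<p$ and $qm_0^-+pm_1^->q$ simultaneously. So the ``short case analysis'' you defer to does not follow from what you have assumed; the classical Santha--Vazirani induction needs a quantitatively stronger hypothesis, typically tracking the achievable bias as a function of the density $|\Ext^{-1}(1)|/2^{bN}$, to push through this case. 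Your proposal as written does not supply that strengthening. The paper's global mixture construction avoids the difficulty entirely.
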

\begin{proof}
	Because $|\Ext^{-1}(0)|+|\Ext^{-1}(1)|=2^{bN}$, either $|\Ext^{-1}(0)|\geq 2^{bN-1}$ or $|\Ext^{-1}(1)|\geq 2^{bN-1}$. Without loss of generality, let us assume that $|\Ext^{-1}(1)|\geq 2^{bN-1}$. Pick an arbitrary subset $S$ of $\Ext^{-1}(1)$ with $|S|=2^{bN-1}$. Consider the source $X$ that is uniformly distributed on $S$ with probability $2^{-\delta}$ and is uniformly distributed on $\{0,1\}^{bN}\setminus S$ with probability $1-2^{-\delta}$. It is easy to check that $\Pr[\Ext(X)=1]\geq 2^{-\delta}>1/2$.
	
What remains is to show that $X$ is a $(b,N,\delta)$-reverse block source.  In this proof, we use $X_i$ to stand for the $i$-th bit of the string $X$, and use $X(i)$ to stand for the $i$-th block of $X$, that contains $b$ bits $X_{(i-1)b+1},X_{(i-1)b+2},\cdots,X_{ib}$. 
	
The crucial observation is that $\Pr[X=x]/\Pr[X=x']\leq (2^{-\delta})/(1-2^{-\delta})$ for any $x,x'\in\{0,1\}^{bN}$. In particular, for every $i\in [bN]$, and every $x_1,\cdots,x_{i-1},x_{i+1},\cdots,x_{bN}$, we have
\begin{equation}\label{eq:app_b1}
\begin{aligned}
\frac{1-2^{-\delta}}{2^{-\delta}}\leq \frac{\Pr[X=x_1\cdots x_{i-1}0x_{i+1}\cdots x_{bN}]}{\Pr[X=x_1\cdots x_{i-1}1x_{i+1}\cdots x_{bN}]} \leq \frac{2^{-\delta}}{1-2^{-\delta}}.
\end{aligned}
\end{equation}
Combining with the facts that 
\begin{equation}
\begin{aligned}
\Pr[X_i=0,X_{i+1}=x_{i+1},\cdots,X_{bN}=x_{bN}]=\sum_{X_1\cdots X_{i-1}\in\{0,1\}^{i-1}}\Pr[X=x_1\cdots x_{i-1}0x_{i+1}\cdots x_{bN}]
\end{aligned}
\end{equation}
and 
\begin{equation}
\begin{aligned}
\Pr[X_i=1,X_{i+1}=x_{i+1},\cdots,X_{bN}=x_{bN}]=\sum_{X_1\cdots X_{i-1}\in\{0,1\}^{i-1}}\Pr[X=x_1\cdots x_{i-1}1x_{i+1}\cdots x_{bN}]
\end{aligned}
\end{equation}
and inequality Eq.~\eqref{eq:app_b1}, we have 
\begin{equation}
\begin{aligned}
\frac{1-2^{-\delta}}{2^{-\delta}}\leq \frac{\Pr[X_i=0,X_{i+1}=x_{i+1},\cdots,X_{bN}=x_{bN}]}{\Pr[X_i=1,X_{i+1}=x_{i+1},\cdots,X_{bN}=x_{bN}]} \leq \frac{2^{-\delta}}{1-2^{-\delta}}.
\end{aligned}
\end{equation}
This is equivalent to saying that for any $x_i,x_{i+1},\cdots,x_{bN}$, 
\begin{equation}\label{eq:app_b2}
\begin{aligned}
\Pr[X_i=x_i| X_{i+1}=x_{i+1},\cdots,X_{bN}=x_{bN}]\leq 2^{-\delta}.
\end{aligned}
\end{equation}

Finally, we have for any $k_1,k_2\in[N]$ where $k_1\leq k_2$, and any $x(k_1),x(k_1+1),\cdots,x(N)$,
\begin{equation}
\begin{aligned}
&\Pr[X(k_1)=x(k_1),\cdots,X(k_2)=x(k_2)| X(k_2+1)=x(k_2+1),\cdots,X(N)=x(N)]\\
=&\Pr[X_{(k_1-1)b+1}=x_{(k_1-1)b+1},\cdots,X_{k_2b}=x_{k_2b}| X(k_2+1)=x(k_2+1),\cdots,X(N)=x(N)]\\
=&\Pi_{i=(k_1-1)b+1}^{k_2b} \Pr[X_{i}=x_{i}| X_{i+1}=x_{i+1},\cdots,X_{bN}=x_{bN}]\\
\leq & \Pi_{i=(k_1-1)b+1}^{k_2b} 2^{-\delta}=2^{-(k_2-k_1+1)\delta b},
\end{aligned}
\end{equation}
where the inequality is by inequality Eq.~\eqref{eq:app_b2}.
Now we can conclude that $X$ is a $(b,N,\delta)$-reverse block source by Definition \ref{def:rbs}.

\end{proof}

\bibliographystyle{apsrev4-1}

\bibliography{ref}

\begin{thebibliography}{39}%
\makeatletter
\providecommand \@ifxundefined [1]{%
 \@ifx{#1\undefined}
}%
\providecommand \@ifnum [1]{%
 \ifnum #1\expandafter \@firstoftwo
 \else \expandafter \@secondoftwo
 \fi
}%
\providecommand \@ifx [1]{%
 \ifx #1\expandafter \@firstoftwo
 \else \expandafter \@secondoftwo
 \fi
}%
\providecommand \natexlab [1]{#1}%
\providecommand \enquote  [1]{``#1''}%
\providecommand \bibnamefont  [1]{#1}%
\providecommand \bibfnamefont [1]{#1}%
\providecommand \citenamefont [1]{#1}%
\providecommand \href@noop [0]{\@secondoftwo}%
\providecommand \href [0]{\begingroup \@sanitize@url \@href}%
\providecommand \@href[1]{\@@startlink{#1}\@@href}%
\providecommand \@@href[1]{\endgroup#1\@@endlink}%
\providecommand \@sanitize@url [0]{\catcode `\\12\catcode `\$12\catcode
  `\&12\catcode `\#12\catcode `\^12\catcode `\_12\catcode `\%12\relax}%
\providecommand \@@startlink[1]{}%
\providecommand \@@endlink[0]{}%
\providecommand \url  [0]{\begingroup\@sanitize@url \@url }%
\providecommand \@url [1]{\endgroup\@href {#1}{\urlprefix }}%
\providecommand \urlprefix  [0]{URL }%
\providecommand \Eprint [0]{\href }%
\providecommand \doibase [0]{http://dx.doi.org/}%
\providecommand \selectlanguage [0]{\@gobble}%
\providecommand \bibinfo  [0]{\@secondoftwo}%
\providecommand \bibfield  [0]{\@secondoftwo}%
\providecommand \translation [1]{[#1]}%
\providecommand \BibitemOpen [0]{}%
\providecommand \bibitemStop [0]{}%
\providecommand \bibitemNoStop [0]{.\EOS\space}%
\providecommand \EOS [0]{\spacefactor3000\relax}%
\providecommand \BibitemShut  [1]{\csname bibitem#1\endcsname}%
\let\auto@bib@innerbib\@empty
\bibitem [{\citenamefont {Impagliazzo}\ and\ \citenamefont
  {Zuckerman}(1989)}]{impagliazzo1989recycle}%
  \BibitemOpen
  \bibfield  {author} {\bibinfo {author} {\bibfnamefont {R.}~\bibnamefont
  {Impagliazzo}}\ and\ \bibinfo {author} {\bibfnamefont {D.}~\bibnamefont
  {Zuckerman}},\ }in\ \href@noop {} {\emph {\bibinfo {booktitle} {FOCS}}},\
  Vol.~\bibinfo {volume} {30}\ (\bibinfo {year} {1989})\ pp.\ \bibinfo {pages}
  {248--253}\BibitemShut {NoStop}%
\bibitem [{\citenamefont {Bennett}\ and\ \citenamefont
  {Brassard}(1984)}]{bennett1984quantum}%
  \BibitemOpen
  \bibfield  {author} {\bibinfo {author} {\bibfnamefont {C.~H.}\ \bibnamefont
  {Bennett}}\ and\ \bibinfo {author} {\bibfnamefont {G.}~\bibnamefont
  {Brassard}},\ }in\ \href {https://doi.org/10.1016/j.tcs.2014.05.025} {\emph
  {\bibinfo {booktitle} {Proceedings of the IEEE International Conference on
  Computers, Systems and Signal Processing}}}\ (\bibinfo  {publisher} {IEEE
  Press},\ \bibinfo {address} {New York},\ \bibinfo {year} {1984})\ pp.\
  \bibinfo {pages} {175--179}\BibitemShut {NoStop}%
\bibitem [{\citenamefont {Lo}\ and\ \citenamefont
  {Chau}(1999)}]{lo1999Unconditional}%
  \BibitemOpen
  \bibfield  {author} {\bibinfo {author} {\bibfnamefont {H.~K.}\ \bibnamefont
  {Lo}}\ and\ \bibinfo {author} {\bibfnamefont {H.~F.}\ \bibnamefont {Chau}},\
  }\href {http://science.sciencemag.org/content/283/5410/2050} {\bibfield
  {journal} {\bibinfo  {journal} {Science}\ }\textbf {\bibinfo {volume}
  {283}},\ \bibinfo {pages} {2050} (\bibinfo {year} {1999})}\BibitemShut
  {NoStop}%
\bibitem [{\citenamefont {Shor}\ and\ \citenamefont
  {Preskill}(2000)}]{shor2000Simple}%
  \BibitemOpen
  \bibfield  {author} {\bibinfo {author} {\bibfnamefont {P.~W.}\ \bibnamefont
  {Shor}}\ and\ \bibinfo {author} {\bibfnamefont {J.}~\bibnamefont
  {Preskill}},\ }\href
  {https://journals.aps.org/prl/abstract/10.1103/PhysRevLett.85.441} {\bibfield
   {journal} {\bibinfo  {journal} {Phys. Rev. Lett.}\ }\textbf {\bibinfo
  {volume} {85}},\ \bibinfo {pages} {441} (\bibinfo {year} {2000})}\BibitemShut
  {NoStop}%
\bibitem [{\citenamefont {Bennett}\ \emph {et~al.}(1995)\citenamefont
  {Bennett}, \citenamefont {Brassard}, \citenamefont {Cr{\'e}peau},\ and\
  \citenamefont {Maurer}}]{bennett1995generalized}%
  \BibitemOpen
  \bibfield  {author} {\bibinfo {author} {\bibfnamefont {C.~H.}\ \bibnamefont
  {Bennett}}, \bibinfo {author} {\bibfnamefont {G.}~\bibnamefont {Brassard}},
  \bibinfo {author} {\bibfnamefont {C.}~\bibnamefont {Cr{\'e}peau}}, \ and\
  \bibinfo {author} {\bibfnamefont {U.~M.}\ \bibnamefont {Maurer}},\
  }\href@noop {} {\bibfield  {journal} {\bibinfo  {journal} {IEEE Transactions
  on Information theory}\ }\textbf {\bibinfo {volume} {41}},\ \bibinfo {pages}
  {1915} (\bibinfo {year} {1995})}\BibitemShut {NoStop}%
\bibitem [{\citenamefont {Bennett}\ \emph {et~al.}(1988)\citenamefont
  {Bennett}, \citenamefont {Brassard},\ and\ \citenamefont
  {Robert}}]{bennett1988privacy}%
  \BibitemOpen
  \bibfield  {author} {\bibinfo {author} {\bibfnamefont {C.~H.}\ \bibnamefont
  {Bennett}}, \bibinfo {author} {\bibfnamefont {G.}~\bibnamefont {Brassard}}, \
  and\ \bibinfo {author} {\bibfnamefont {J.-M.}\ \bibnamefont {Robert}},\
  }\href@noop {} {\bibfield  {journal} {\bibinfo  {journal} {SIAM journal on
  Computing}\ }\textbf {\bibinfo {volume} {17}},\ \bibinfo {pages} {210}
  (\bibinfo {year} {1988})}\BibitemShut {NoStop}%
\bibitem [{\citenamefont {Konig}\ and\ \citenamefont
  {Renner}(2011)}]{konig2011sampling}%
  \BibitemOpen
  \bibfield  {author} {\bibinfo {author} {\bibfnamefont {R.}~\bibnamefont
  {Konig}}\ and\ \bibinfo {author} {\bibfnamefont {R.}~\bibnamefont {Renner}},\
  }\href@noop {} {\bibfield  {journal} {\bibinfo  {journal} {IEEE Transactions
  on Information Theory}\ }\textbf {\bibinfo {volume} {57}},\ \bibinfo {pages}
  {4760} (\bibinfo {year} {2011})}\BibitemShut {NoStop}%
\bibitem [{\citenamefont {Trevisan}(2001)}]{trevisan2001extractors}%
  \BibitemOpen
  \bibfield  {author} {\bibinfo {author} {\bibfnamefont {L.}~\bibnamefont
  {Trevisan}},\ }\href@noop {} {\bibfield  {journal} {\bibinfo  {journal}
  {Journal of the ACM}\ }\textbf {\bibinfo {volume} {48}},\ \bibinfo {pages}
  {860} (\bibinfo {year} {2001})}\BibitemShut {NoStop}%
\bibitem [{\citenamefont {De}\ \emph {et~al.}(2012)\citenamefont {De},
  \citenamefont {Portmann}, \citenamefont {Vidick},\ and\ \citenamefont
  {Renner}}]{de2012trevisan}%
  \BibitemOpen
  \bibfield  {author} {\bibinfo {author} {\bibfnamefont {A.}~\bibnamefont
  {De}}, \bibinfo {author} {\bibfnamefont {C.}~\bibnamefont {Portmann}},
  \bibinfo {author} {\bibfnamefont {T.}~\bibnamefont {Vidick}}, \ and\ \bibinfo
  {author} {\bibfnamefont {R.}~\bibnamefont {Renner}},\ }\href@noop {}
  {\bibfield  {journal} {\bibinfo  {journal} {SIAM Journal on Computing}\
  }\textbf {\bibinfo {volume} {41}},\ \bibinfo {pages} {915} (\bibinfo {year}
  {2012})}\BibitemShut {NoStop}%
\bibitem [{\citenamefont {Mansour}\ \emph {et~al.}(1993)\citenamefont
  {Mansour}, \citenamefont {Nisan},\ and\ \citenamefont
  {Tiwari}}]{mansour1993computational}%
  \BibitemOpen
  \bibfield  {author} {\bibinfo {author} {\bibfnamefont {Y.}~\bibnamefont
  {Mansour}}, \bibinfo {author} {\bibfnamefont {N.}~\bibnamefont {Nisan}}, \
  and\ \bibinfo {author} {\bibfnamefont {P.}~\bibnamefont {Tiwari}},\
  }\href@noop {} {\bibfield  {journal} {\bibinfo  {journal} {Theoretical
  Computer Science}\ }\textbf {\bibinfo {volume} {107}},\ \bibinfo {pages}
  {121} (\bibinfo {year} {1993})}\BibitemShut {NoStop}%
\bibitem [{\citenamefont {Nisan}\ and\ \citenamefont
  {Zuckerman}(1996)}]{nisan1996randomness}%
  \BibitemOpen
  \bibfield  {author} {\bibinfo {author} {\bibfnamefont {N.}~\bibnamefont
  {Nisan}}\ and\ \bibinfo {author} {\bibfnamefont {D.}~\bibnamefont
  {Zuckerman}},\ }\href@noop {} {\bibfield  {journal} {\bibinfo  {journal}
  {Journal of Computer and System Sciences}\ }\textbf {\bibinfo {volume}
  {52}},\ \bibinfo {pages} {43} (\bibinfo {year} {1996})}\BibitemShut {NoStop}%
\bibitem [{\citenamefont {Ma}\ \emph {et~al.}(2013)\citenamefont {Ma},
  \citenamefont {Xu}, \citenamefont {Xu}, \citenamefont {Tan}, \citenamefont
  {Qi},\ and\ \citenamefont {Lo}}]{ma2013postprocessing}%
  \BibitemOpen
  \bibfield  {author} {\bibinfo {author} {\bibfnamefont {X.}~\bibnamefont
  {Ma}}, \bibinfo {author} {\bibfnamefont {F.}~\bibnamefont {Xu}}, \bibinfo
  {author} {\bibfnamefont {H.}~\bibnamefont {Xu}}, \bibinfo {author}
  {\bibfnamefont {X.}~\bibnamefont {Tan}}, \bibinfo {author} {\bibfnamefont
  {B.}~\bibnamefont {Qi}}, \ and\ \bibinfo {author} {\bibfnamefont {H.-K.}\
  \bibnamefont {Lo}},\ }\href@noop {} {\bibfield  {journal} {\bibinfo
  {journal} {Physical Review A}\ }\textbf {\bibinfo {volume} {87}},\ \bibinfo
  {pages} {062327} (\bibinfo {year} {2013})}\BibitemShut {NoStop}%
\bibitem [{\citenamefont {Nussbaumer}(1981)}]{nussbaumer1981fast}%
  \BibitemOpen
  \bibfield  {author} {\bibinfo {author} {\bibfnamefont {H.~J.}\ \bibnamefont
  {Nussbaumer}},\ }in\ \href@noop {} {\emph {\bibinfo {booktitle} {Fast Fourier
  Transform and Convolution Algorithms}}}\ (\bibinfo  {publisher} {Springer},\
  \bibinfo {year} {1981})\ pp.\ \bibinfo {pages} {80--111}\BibitemShut
  {NoStop}%
\bibitem [{\citenamefont {Golub}\ and\ \citenamefont
  {Van~Loan}(1996)}]{golub1996matrix}%
  \BibitemOpen
  \bibfield  {author} {\bibinfo {author} {\bibfnamefont {G.~H.}\ \bibnamefont
  {Golub}}\ and\ \bibinfo {author} {\bibfnamefont {C.~F.}\ \bibnamefont
  {Van~Loan}},\ }\href@noop {} {\enquote {\bibinfo {title} {Matrix
  computations},}\ } (\bibinfo {year} {1996})\BibitemShut {NoStop}%
\bibitem [{\citenamefont {Zhang}\ \emph {et~al.}(2016)\citenamefont {Zhang},
  \citenamefont {Nie}, \citenamefont {Zhou}, \citenamefont {Liang},
  \citenamefont {Ma}, \citenamefont {Zhang},\ and\ \citenamefont
  {Pan}}]{3GQRNG}%
  \BibitemOpen
  \bibfield  {author} {\bibinfo {author} {\bibfnamefont {X.-G.}\ \bibnamefont
  {Zhang}}, \bibinfo {author} {\bibfnamefont {Y.-Q.}\ \bibnamefont {Nie}},
  \bibinfo {author} {\bibfnamefont {H.}~\bibnamefont {Zhou}}, \bibinfo {author}
  {\bibfnamefont {H.}~\bibnamefont {Liang}}, \bibinfo {author} {\bibfnamefont
  {X.}~\bibnamefont {Ma}}, \bibinfo {author} {\bibfnamefont {J.}~\bibnamefont
  {Zhang}}, \ and\ \bibinfo {author} {\bibfnamefont {J.-W.}\ \bibnamefont
  {Pan}},\ }\href {\doibase 10.1063/1.4958663} {\bibfield  {journal} {\bibinfo
  {journal} {Review of Scientific Instruments}\ }\textbf {\bibinfo {volume}
  {87}},\ \bibinfo {pages} {076102} (\bibinfo {year} {2016})},\ \Eprint
  {http://arxiv.org/abs/https://aip.scitation.org/doi/pdf/10.1063/1.4958663}
  {https://aip.scitation.org/doi/pdf/10.1063/1.4958663} \BibitemShut {NoStop}%
\bibitem [{\citenamefont {Vadhan}(2003)}]{Vadhan03}%
  \BibitemOpen
  \bibfield  {author} {\bibinfo {author} {\bibfnamefont {S.~P.}\ \bibnamefont
  {Vadhan}},\ }in\ \href@noop {} {\emph {\bibinfo {booktitle} {Advances in
  Cryptology - {CRYPTO} 2003, 23rd Annual International Cryptology Conference,
  Santa Barbara, California, USA, August 17-21, 2003, Proceedings}}},\ \bibinfo
  {series} {Lecture Notes in Computer Science}, Vol.\ \bibinfo {volume}
  {2729},\ \bibinfo {editor} {edited by\ \bibinfo {editor} {\bibfnamefont
  {D.}~\bibnamefont {Boneh}}}\ (\bibinfo  {publisher} {Springer},\ \bibinfo
  {year} {2003})\ pp.\ \bibinfo {pages} {61--77}\BibitemShut {NoStop}%
\bibitem [{\citenamefont {Chor}\ and\ \citenamefont {Goldreich}(1988)}]{CG88}%
  \BibitemOpen
  \bibfield  {author} {\bibinfo {author} {\bibfnamefont {B.}~\bibnamefont
  {Chor}}\ and\ \bibinfo {author} {\bibfnamefont {O.}~\bibnamefont
  {Goldreich}},\ }\href@noop {} {\bibfield  {journal} {\bibinfo  {journal}
  {SIAM Journal on Computing}\ }\textbf {\bibinfo {volume} {17}},\ \bibinfo
  {pages} {230} (\bibinfo {year} {1988})}\BibitemShut {NoStop}%
\bibitem [{\citenamefont {Vadhan}(2012)}]{survey}%
  \BibitemOpen
  \bibfield  {author} {\bibinfo {author} {\bibfnamefont {S.~P.}\ \bibnamefont
  {Vadhan}},\ }\href {\doibase 10.1561/0400000010} {\bibfield  {journal}
  {\bibinfo  {journal} {Found. Trends Theor. Comput. Sci.}\ }\textbf {\bibinfo
  {volume} {7}},\ \bibinfo {pages} {1} (\bibinfo {year} {2012})}\BibitemShut
  {NoStop}%
\bibitem [{\citenamefont {Mansour}\ \emph {et~al.}(1990)\citenamefont
  {Mansour}, \citenamefont {Nisan},\ and\ \citenamefont
  {Tiwari}}]{MansourNT90}%
  \BibitemOpen
  \bibfield  {author} {\bibinfo {author} {\bibfnamefont {Y.}~\bibnamefont
  {Mansour}}, \bibinfo {author} {\bibfnamefont {N.}~\bibnamefont {Nisan}}, \
  and\ \bibinfo {author} {\bibfnamefont {P.}~\bibnamefont {Tiwari}},\ }in\
  \href {\doibase 10.1109/SCT.1990.113957} {\emph {\bibinfo {booktitle}
  {Proceedings: Fifth Annual Structure in Complexity Theory Conference,
  Universitat Polit{\`{e}}cnica de Catalunya, Barcelona, Spain, July 8-11,
  1990}}}\ (\bibinfo  {publisher} {{IEEE} Computer Society},\ \bibinfo {year}
  {1990})\ p.~\bibinfo {pages} {90}\BibitemShut {NoStop}%
\bibitem [{\citenamefont {Krawczyk}(1994)}]{Krawczyk94}%
  \BibitemOpen
  \bibfield  {author} {\bibinfo {author} {\bibfnamefont {H.}~\bibnamefont
  {Krawczyk}},\ }in\ \href {\doibase 10.1007/3-540-48658-5\_15} {\emph
  {\bibinfo {booktitle} {Advances in Cryptology - {CRYPTO} '94, 14th Annual
  International Cryptology Conference, Santa Barbara, California, USA, August
  21-25, 1994, Proceedings}}},\ \bibinfo {series} {Lecture Notes in Computer
  Science}, Vol.\ \bibinfo {volume} {839},\ \bibinfo {editor} {edited by\
  \bibinfo {editor} {\bibfnamefont {Y.}~\bibnamefont {Desmedt}}}\ (\bibinfo
  {publisher} {Springer},\ \bibinfo {year} {1994})\ pp.\ \bibinfo {pages}
  {129--139}\BibitemShut {NoStop}%
\bibitem [{\citenamefont {Tomamichel}\ \emph {et~al.}(2010)\citenamefont
  {Tomamichel}, \citenamefont {Renner}, \citenamefont {Schaffner},\ and\
  \citenamefont {Smith}}]{TomamichelRSS10}%
  \BibitemOpen
  \bibfield  {author} {\bibinfo {author} {\bibfnamefont {M.}~\bibnamefont
  {Tomamichel}}, \bibinfo {author} {\bibfnamefont {R.}~\bibnamefont {Renner}},
  \bibinfo {author} {\bibfnamefont {C.}~\bibnamefont {Schaffner}}, \ and\
  \bibinfo {author} {\bibfnamefont {A.~D.}\ \bibnamefont {Smith}},\ }in\ \href
  {\doibase 10.1109/ISIT.2010.5513652} {\emph {\bibinfo {booktitle} {{IEEE}
  International Symposium on Information Theory, {ISIT} 2010, June 13-18, 2010,
  Austin, Texas, USA, Proceedings}}}\ (\bibinfo  {publisher} {{IEEE}},\
  \bibinfo {year} {2010})\ pp.\ \bibinfo {pages} {2703--2707}\BibitemShut
  {NoStop}%
\bibitem [{\citenamefont {Rarity}\ \emph {et~al.}(1994)\citenamefont {Rarity},
  \citenamefont {Owens},\ and\ \citenamefont {P.R.Tapster}}]{Rarity94}%
  \BibitemOpen
  \bibfield  {author} {\bibinfo {author} {\bibfnamefont {J.~G.}\ \bibnamefont
  {Rarity}}, \bibinfo {author} {\bibfnamefont {P.}~\bibnamefont {Owens}}, \
  and\ \bibinfo {author} {\bibnamefont {P.R.Tapster}},\ }\href {\doibase
  10.1080/09500349414552281} {\bibfield  {journal} {\bibinfo  {journal}
  {Journal of Modern Optics}\ }\textbf {\bibinfo {volume} {41}},\ \bibinfo
  {pages} {2435} (\bibinfo {year} {1994})},\ \Eprint
  {http://arxiv.org/abs/http://dx.doi.org/10.1080/09500349414552281}
  {http://dx.doi.org/10.1080/09500349414552281} \BibitemShut {NoStop}%
\bibitem [{\citenamefont {Stefanov}\ \emph {et~al.}(2000)\citenamefont
  {Stefanov}, \citenamefont {Gisin}, \citenamefont {Guinnard}, \citenamefont
  {Guinnard},\ and\ \citenamefont {Zbinden}}]{Stefanov00}%
  \BibitemOpen
  \bibfield  {author} {\bibinfo {author} {\bibfnamefont {A.}~\bibnamefont
  {Stefanov}}, \bibinfo {author} {\bibfnamefont {N.}~\bibnamefont {Gisin}},
  \bibinfo {author} {\bibfnamefont {O.}~\bibnamefont {Guinnard}}, \bibinfo
  {author} {\bibfnamefont {L.}~\bibnamefont {Guinnard}}, \ and\ \bibinfo
  {author} {\bibfnamefont {H.}~\bibnamefont {Zbinden}},\ }\href {\doibase
  10.1080/09500340008233380} {\bibfield  {journal} {\bibinfo  {journal}
  {Journal of Modern Optics}\ }\textbf {\bibinfo {volume} {47}},\ \bibinfo
  {pages} {595} (\bibinfo {year} {2000})},\ \Eprint
  {http://arxiv.org/abs/http://dx.doi.org/10.1080/09500340008233380}
  {http://dx.doi.org/10.1080/09500340008233380} \BibitemShut {NoStop}%
\bibitem [{\citenamefont {Jennewein}\ \emph {et~al.}(2000)\citenamefont
  {Jennewein}, \citenamefont {Achleitner}, \citenamefont {Weihs}, \citenamefont
  {Weinfurter},\ and\ \citenamefont {Zeilinger}}]{Jennewein00}%
  \BibitemOpen
  \bibfield  {author} {\bibinfo {author} {\bibfnamefont {T.}~\bibnamefont
  {Jennewein}}, \bibinfo {author} {\bibfnamefont {U.}~\bibnamefont
  {Achleitner}}, \bibinfo {author} {\bibfnamefont {G.}~\bibnamefont {Weihs}},
  \bibinfo {author} {\bibfnamefont {H.}~\bibnamefont {Weinfurter}}, \ and\
  \bibinfo {author} {\bibfnamefont {A.}~\bibnamefont {Zeilinger}},\ }\href
  {\doibase http://dx.doi.org/10.1063/1.1150518} {\bibfield  {journal}
  {\bibinfo  {journal} {Review of Scientific Instruments}\ }\textbf {\bibinfo
  {volume} {71}},\ \bibinfo {pages} {1675} (\bibinfo {year}
  {2000})}\BibitemShut {NoStop}%
\bibitem [{\citenamefont {Gabriel}\ \emph {et~al.}(2010)\citenamefont
  {Gabriel}, \citenamefont {Wittmann}, \citenamefont {Marquardt},\ and\
  \citenamefont {Leuchs}}]{Gabriel10}%
  \BibitemOpen
  \bibfield  {author} {\bibinfo {author} {\bibfnamefont {C.}~\bibnamefont
  {Gabriel}}, \bibinfo {author} {\bibfnamefont {C.}~\bibnamefont {Wittmann}},
  \bibinfo {author} {\bibfnamefont {C.}~\bibnamefont {Marquardt}}, \ and\
  \bibinfo {author} {\bibfnamefont {G.}~\bibnamefont {Leuchs}},\ }\href
  {\doibase 10.1038/nphoton.2010.197} {\bibfield  {journal} {\bibinfo
  {journal} {Nature Photonics}\ }\textbf {\bibinfo {volume} {4}},\ \bibinfo
  {pages} {711} (\bibinfo {year} {2010})}\BibitemShut {NoStop}%
\bibitem [{\citenamefont {Symul}\ \emph {et~al.}(2011)\citenamefont {Symul},
  \citenamefont {Assad},\ and\ \citenamefont {Lam}}]{Symul11}%
  \BibitemOpen
  \bibfield  {author} {\bibinfo {author} {\bibfnamefont {T.}~\bibnamefont
  {Symul}}, \bibinfo {author} {\bibfnamefont {S.~M.}\ \bibnamefont {Assad}}, \
  and\ \bibinfo {author} {\bibfnamefont {P.~K.}\ \bibnamefont {Lam}},\ }\href
  {\doibase http://dx.doi.org/10.1063/1.3597793} {\bibfield  {journal}
  {\bibinfo  {journal} {Applied Physics Letters}\ }\textbf {\bibinfo {volume}
  {98}},\ \bibinfo {eid} {231103} (\bibinfo {year} {2011}),\
  http://dx.doi.org/10.1063/1.3597793}\BibitemShut {NoStop}%
\bibitem [{\citenamefont {Jofre}\ \emph {et~al.}(2011)\citenamefont {Jofre},
  \citenamefont {Curty}, \citenamefont {Steinlechner}, \citenamefont {Anzolin},
  \citenamefont {Torres}, \citenamefont {Mitchell},\ and\ \citenamefont
  {Pruneri}}]{Jofre11}%
  \BibitemOpen
  \bibfield  {author} {\bibinfo {author} {\bibfnamefont {M.}~\bibnamefont
  {Jofre}}, \bibinfo {author} {\bibfnamefont {M.}~\bibnamefont {Curty}},
  \bibinfo {author} {\bibfnamefont {F.}~\bibnamefont {Steinlechner}}, \bibinfo
  {author} {\bibfnamefont {G.}~\bibnamefont {Anzolin}}, \bibinfo {author}
  {\bibfnamefont {J.~P.}\ \bibnamefont {Torres}}, \bibinfo {author}
  {\bibfnamefont {M.~W.}\ \bibnamefont {Mitchell}}, \ and\ \bibinfo {author}
  {\bibfnamefont {V.}~\bibnamefont {Pruneri}},\ }\href {\doibase
  10.1364/OE.19.020665} {\bibfield  {journal} {\bibinfo  {journal} {Optics
  Express}\ }\textbf {\bibinfo {volume} {19}},\ \bibinfo {pages} {20665}
  (\bibinfo {year} {2011})}\BibitemShut {NoStop}%
\bibitem [{\citenamefont {Wahl}\ \emph {et~al.}(2011)\citenamefont {Wahl},
  \citenamefont {Leifgen}, \citenamefont {Berlin}, \citenamefont {Rohlicke},
  \citenamefont {Rahn},\ and\ \citenamefont {Benson}}]{Wahl11}%
  \BibitemOpen
  \bibfield  {author} {\bibinfo {author} {\bibfnamefont {M.}~\bibnamefont
  {Wahl}}, \bibinfo {author} {\bibfnamefont {M.}~\bibnamefont {Leifgen}},
  \bibinfo {author} {\bibfnamefont {M.}~\bibnamefont {Berlin}}, \bibinfo
  {author} {\bibfnamefont {T.}~\bibnamefont {Rohlicke}}, \bibinfo {author}
  {\bibfnamefont {H.-J.}\ \bibnamefont {Rahn}}, \ and\ \bibinfo {author}
  {\bibfnamefont {O.}~\bibnamefont {Benson}},\ }\href {\doibase
  http://dx.doi.org/10.1063/1.3578456} {\bibfield  {journal} {\bibinfo
  {journal} {Applied Physics Letters}\ }\textbf {\bibinfo {volume} {98}},\
  \bibinfo {eid} {171105} (\bibinfo {year} {2011}),\
  http://dx.doi.org/10.1063/1.3578456}\BibitemShut {NoStop}%
\bibitem [{\citenamefont {Li}\ \emph {et~al.}(2013)\citenamefont {Li},
  \citenamefont {Wang}, \citenamefont {Wu}, \citenamefont {Ma},\ and\
  \citenamefont {Zhai}}]{Li13}%
  \BibitemOpen
  \bibfield  {author} {\bibinfo {author} {\bibfnamefont {S.}~\bibnamefont
  {Li}}, \bibinfo {author} {\bibfnamefont {L.}~\bibnamefont {Wang}}, \bibinfo
  {author} {\bibfnamefont {L.-A.}\ \bibnamefont {Wu}}, \bibinfo {author}
  {\bibfnamefont {H.-Q.}\ \bibnamefont {Ma}}, \ and\ \bibinfo {author}
  {\bibfnamefont {G.-J.}\ \bibnamefont {Zhai}},\ }\href {\doibase
  10.1364/JOSAA.30.000124} {\bibfield  {journal} {\bibinfo  {journal} {Journal
  of the Optical Society of America A}\ }\textbf {\bibinfo {volume} {30}},\
  \bibinfo {pages} {124} (\bibinfo {year} {2013})}\BibitemShut {NoStop}%
\bibitem [{\citenamefont {Nie}\ \emph {et~al.}(2014)\citenamefont {Nie},
  \citenamefont {Zhang}, \citenamefont {Zhang}, \citenamefont {Wang},
  \citenamefont {Ma}, \citenamefont {Zhang},\ and\ \citenamefont
  {Pan}}]{Nie14}%
  \BibitemOpen
  \bibfield  {author} {\bibinfo {author} {\bibfnamefont {Y.-Q.}\ \bibnamefont
  {Nie}}, \bibinfo {author} {\bibfnamefont {H.-F.}\ \bibnamefont {Zhang}},
  \bibinfo {author} {\bibfnamefont {Z.}~\bibnamefont {Zhang}}, \bibinfo
  {author} {\bibfnamefont {J.}~\bibnamefont {Wang}}, \bibinfo {author}
  {\bibfnamefont {X.}~\bibnamefont {Ma}}, \bibinfo {author} {\bibfnamefont
  {J.}~\bibnamefont {Zhang}}, \ and\ \bibinfo {author} {\bibfnamefont {J.-W.}\
  \bibnamefont {Pan}},\ }\href {\doibase http://dx.doi.org/10.1063/1.4863224}
  {\bibfield  {journal} {\bibinfo  {journal} {Applied Physics Letters}\
  }\textbf {\bibinfo {volume} {104}},\ \bibinfo {eid} {051110} (\bibinfo {year}
  {2014}),\ http://dx.doi.org/10.1063/1.4863224}\BibitemShut {NoStop}%
\bibitem [{\citenamefont {Gehring}\ \emph {et~al.}(2021)\citenamefont
  {Gehring}, \citenamefont {Lupo}, \citenamefont {Kordts}, \citenamefont
  {Solar~Nikolic}, \citenamefont {Jain}, \citenamefont {Rydberg}, \citenamefont
  {Pedersen}, \citenamefont {Pirandola},\ and\ \citenamefont
  {Andersen}}]{gehring2021homodyne}%
  \BibitemOpen
  \bibfield  {author} {\bibinfo {author} {\bibfnamefont {T.}~\bibnamefont
  {Gehring}}, \bibinfo {author} {\bibfnamefont {C.}~\bibnamefont {Lupo}},
  \bibinfo {author} {\bibfnamefont {A.}~\bibnamefont {Kordts}}, \bibinfo
  {author} {\bibfnamefont {D.}~\bibnamefont {Solar~Nikolic}}, \bibinfo {author}
  {\bibfnamefont {N.}~\bibnamefont {Jain}}, \bibinfo {author} {\bibfnamefont
  {T.}~\bibnamefont {Rydberg}}, \bibinfo {author} {\bibfnamefont {T.~B.}\
  \bibnamefont {Pedersen}}, \bibinfo {author} {\bibfnamefont {S.}~\bibnamefont
  {Pirandola}}, \ and\ \bibinfo {author} {\bibfnamefont {U.~L.}\ \bibnamefont
  {Andersen}},\ }\href@noop {} {\bibfield  {journal} {\bibinfo  {journal}
  {Nature Communications}\ }\textbf {\bibinfo {volume} {12}},\ \bibinfo {pages}
  {605} (\bibinfo {year} {2021})}\BibitemShut {NoStop}%
\bibitem [{\citenamefont {Renner}(2008)}]{renner2008security}%
  \BibitemOpen
  \bibfield  {author} {\bibinfo {author} {\bibfnamefont {R.}~\bibnamefont
  {Renner}},\ }\href@noop {} {\bibfield  {journal} {\bibinfo  {journal}
  {International Journal of Quantum Information}\ }\textbf {\bibinfo {volume}
  {6}},\ \bibinfo {pages} {1} (\bibinfo {year} {2008})}\BibitemShut {NoStop}%
\bibitem [{\citenamefont {Bai}\ \emph {et~al.}(2021)\citenamefont {Bai},
  \citenamefont {Huang}, \citenamefont {Qiao}, \citenamefont {Nie},
  \citenamefont {Tang}, \citenamefont {Chu}, \citenamefont {Zhang},\ and\
  \citenamefont {Pan}}]{bai202118}%
  \BibitemOpen
  \bibfield  {author} {\bibinfo {author} {\bibfnamefont {B.}~\bibnamefont
  {Bai}}, \bibinfo {author} {\bibfnamefont {J.}~\bibnamefont {Huang}}, \bibinfo
  {author} {\bibfnamefont {G.-R.}\ \bibnamefont {Qiao}}, \bibinfo {author}
  {\bibfnamefont {Y.-Q.}\ \bibnamefont {Nie}}, \bibinfo {author} {\bibfnamefont
  {W.}~\bibnamefont {Tang}}, \bibinfo {author} {\bibfnamefont {T.}~\bibnamefont
  {Chu}}, \bibinfo {author} {\bibfnamefont {J.}~\bibnamefont {Zhang}}, \ and\
  \bibinfo {author} {\bibfnamefont {J.-W.}\ \bibnamefont {Pan}},\ }\href@noop
  {} {\bibfield  {journal} {\bibinfo  {journal} {Applied Physics Letters}\
  }\textbf {\bibinfo {volume} {118}},\ \bibinfo {pages} {264001} (\bibinfo
  {year} {2021})}\BibitemShut {NoStop}%
\bibitem [{\citenamefont {Qi}\ \emph {et~al.}(2010)\citenamefont {Qi},
  \citenamefont {Chi}, \citenamefont {Lo},\ and\ \citenamefont
  {Qian}}]{qi2010high}%
  \BibitemOpen
  \bibfield  {author} {\bibinfo {author} {\bibfnamefont {B.}~\bibnamefont
  {Qi}}, \bibinfo {author} {\bibfnamefont {Y.-M.}\ \bibnamefont {Chi}},
  \bibinfo {author} {\bibfnamefont {H.-K.}\ \bibnamefont {Lo}}, \ and\ \bibinfo
  {author} {\bibfnamefont {L.}~\bibnamefont {Qian}},\ }\href@noop {} {\bibfield
   {journal} {\bibinfo  {journal} {Optics letters}\ }\textbf {\bibinfo {volume}
  {35}},\ \bibinfo {pages} {312} (\bibinfo {year} {2010})}\BibitemShut
  {NoStop}%
\bibitem [{\citenamefont {Xu}\ \emph {et~al.}(2012)\citenamefont {Xu},
  \citenamefont {Qi}, \citenamefont {Ma}, \citenamefont {Xu}, \citenamefont
  {Zheng},\ and\ \citenamefont {Lo}}]{xu2012ultrafast}%
  \BibitemOpen
  \bibfield  {author} {\bibinfo {author} {\bibfnamefont {F.}~\bibnamefont
  {Xu}}, \bibinfo {author} {\bibfnamefont {B.}~\bibnamefont {Qi}}, \bibinfo
  {author} {\bibfnamefont {X.}~\bibnamefont {Ma}}, \bibinfo {author}
  {\bibfnamefont {H.}~\bibnamefont {Xu}}, \bibinfo {author} {\bibfnamefont
  {H.}~\bibnamefont {Zheng}}, \ and\ \bibinfo {author} {\bibfnamefont {H.-K.}\
  \bibnamefont {Lo}},\ }\href@noop {} {\bibfield  {journal} {\bibinfo
  {journal} {Optics express}\ }\textbf {\bibinfo {volume} {20}},\ \bibinfo
  {pages} {12366} (\bibinfo {year} {2012})}\BibitemShut {NoStop}%
\bibitem [{\citenamefont {Hayashi}\ and\ \citenamefont
  {Tsurumaru}(2016)}]{hayashi2016more}%
  \BibitemOpen
  \bibfield  {author} {\bibinfo {author} {\bibfnamefont {M.}~\bibnamefont
  {Hayashi}}\ and\ \bibinfo {author} {\bibfnamefont {T.}~\bibnamefont
  {Tsurumaru}},\ }\href@noop {} {\bibfield  {journal} {\bibinfo  {journal}
  {IEEE Transactions on Information Theory}\ }\textbf {\bibinfo {volume}
  {62}},\ \bibinfo {pages} {2213} (\bibinfo {year} {2016})}\BibitemShut
  {NoStop}%
\bibitem [{\citenamefont {Huang}\ \emph {et~al.}(2022)\citenamefont {Huang},
  \citenamefont {Zhang},\ and\ \citenamefont {Ma}}]{huang2022stream}%
  \BibitemOpen
  \bibfield  {author} {\bibinfo {author} {\bibfnamefont {Y.}~\bibnamefont
  {Huang}}, \bibinfo {author} {\bibfnamefont {X.}~\bibnamefont {Zhang}}, \ and\
  \bibinfo {author} {\bibfnamefont {X.}~\bibnamefont {Ma}},\ }\href@noop {}
  {\bibfield  {journal} {\bibinfo  {journal} {PRX Quantum}\ }\textbf {\bibinfo
  {volume} {3}},\ \bibinfo {pages} {020353} (\bibinfo {year}
  {2022})}\BibitemShut {NoStop}%
\bibitem [{\citenamefont {Santha}\ and\ \citenamefont {Vazirani}(1986)}]{sv1}%
  \BibitemOpen
  \bibfield  {author} {\bibinfo {author} {\bibfnamefont {M.}~\bibnamefont
  {Santha}}\ and\ \bibinfo {author} {\bibfnamefont {U.~V.}\ \bibnamefont
  {Vazirani}},\ }\href@noop {} {\bibfield  {journal} {\bibinfo  {journal} {J.
  Comput. Syst. Sci.}\ }\textbf {\bibinfo {volume} {33}},\ \bibinfo {pages}
  {75–87} (\bibinfo {year} {1986})}\BibitemShut {NoStop}%
\bibitem [{\citenamefont {Reingold}\ \emph {et~al.}(2004)\citenamefont
  {Reingold}, \citenamefont {Vadhan},\ and\ \citenamefont {Wigderson}}]{sv2}%
  \BibitemOpen
  \bibfield  {author} {\bibinfo {author} {\bibfnamefont {O.}~\bibnamefont
  {Reingold}}, \bibinfo {author} {\bibfnamefont {S.}~\bibnamefont {Vadhan}}, \
  and\ \bibinfo {author} {\bibfnamefont {A.}~\bibnamefont {Wigderson}},\
  }\href@noop {} {\enquote {\bibinfo {title} {A note on extracting randomness
  from santha-vazirani sources},}\ } (\bibinfo {year} {2004}),\ \bibinfo {note}
  {unpublished}\BibitemShut {NoStop}%
\end{thebibliography}%


\end{document}